\newtheorem{theorem}{\bf Theorem}
\newtheorem{rem}{\bf Remark}
\newtheorem{corollary}{\bf Corollary}
\newtheorem{lemma}{\bf Lemma}
\newtheorem{definition}{\bf Definition}
\newtheorem{prop}{\bf Proposition}
\begin{document}
%
\title{Ultimate Boundedness for Switched Systems with Multiple Equilibria Under Disturbances}
%
%
%

\author{Sushant~Veer and~Ioannis~Poulakakis
        
        
\thanks{S. Veer and I. Poulakakis are with the Department
of Mechanical Engineering, University of Delaware, Newark,
DE, 19716 USA e-mail: {\tt\small \{veer, poulakas\}@udel.edu.}}
\thanks{This work is supported in part by NSF CAREER Award IIS-1350721 and by NRI-1327614.}
}

\maketitle

\begin{abstract}
In this paper, we investigate the robustness to external disturbances of switched discrete and continuous systems with multiple equilibria. It is shown that if each subsystem of the switched system is Input-to-State Stable (ISS), then under switching signals that satisfy an average dwell-time bound, the solutions are ultimately bounded within a compact set. Furthermore, the size of this set varies monotonically with the supremum norm of the disturbance signal. It is observed that when the subsystems share a common equilibrium, ISS is recovered for solutions of the corresponding switched system; hence, the results in this paper are a natural generalization of classical results in switched systems that exhibit a common equilibrium. Additionally, we provide a method to analytically compute the average dwell time if each subsystem possesses a quadratic ISS-Lyapunov function. Our motivation for studying this class of switched systems arises from certain motion planning problems in robotics, where primitive motions, each corresponding to an equilibrium point of a dynamical system, must be composed to realize a task. However, the results are relevant to a much broader class of applications, in which composition of different modes of behavior is required.
\end{abstract}

\begin{IEEEkeywords}
Switched systems with multiple equilibria; input-to-State Stability; ultimate boundedness.
\end{IEEEkeywords}

%
\IEEEpeerreviewmaketitle

\IEEEpeerreviewmaketitle
\section{Introduction}
\label{sec:intro}
A switched system is characterized by a family of dynamical systems wherein only one member is active at a time, as governed by a switching signal. From the perspective of control synthesis, switched systems allow stitching individual controllers under a single framework by viewing the dynamics produced by each controller as an individual system. This gives rise to a convenient and modular control strategy that allows the use of pre-designed controllers for generating behaviors richer than what an individual controller is capable of. Owing to these factors, switched systems have been widely used in a broad range of applications---such as power electronics \cite{vasca2012dynamics}, automotive control \cite{johansen2003gain}, robotics \cite{aguiar2007trajectory}, and air traffic control \cite{tomlin1996hybrid}.

A significant amount of research has been directed towards the stability and robustness of switched systems. Stability of switched linear systems was studied in \cite{narendra1994common} by the construction of a common Lyapunov function which decreases monotonically despite switching. In the absence of a common Lyapunov function, the notion of multiple Lyapunov functions that are allowed to increase intermittently as long as there is an overall reduction, was proposed in \cite{branicky1998multiple}. Instead of dealing with the construction of special classes of Lyapunov functions, \cite{hespanha1999stability} proved that the stability properties of the individual subsystems can be translated to the switched system when switching is sufficiently slow in the sense that the switching signal satisfies an average dwell-time bound. The notion of average dwell time was further exploited in \cite{vu2007input} to study the input-to-state stability (ISS) of continuous switched systems. Detailed surveys of results in switched systems can be found in \cite{lin2009stability} and \cite{liberzon2003switching}; it is emphasized, however, that the aforementioned papers consider switching among systems that share a common equilibrium, as does the majority of the switched systems literature. 

Various applications demand switching among systems that \emph{do not} share a common equilibrium---such as planning motions of legged \cite{motahar2016composing,gregg-planning2012} and aerial \cite{dorothy2016switched} robots, cooperative manipulation among multiple robotic arms \cite{figueredo2014switching}, power control in multi-cell wireless networks \cite{alpcan2004hybrid}, and models for non-spiking of neurons \cite{makarenkov2017dwell}. Such systems are referred to in the literature as \emph{switched systems with multiple equilibria}.\footnote{To clarify terminology, ``switched systems with multiple equilibria'' refers to switching among subsystems each of which exhibits a unique equilibrium which may not coincide with the equilibrium of another subsystem.} To study the behavior of these systems, \cite{alpcan2004hybrid} and \cite{basar2010} established boundedness of the state for switching signals that satisfy an average dwell-time and a dwell-time bound, respectively. The notion of modal dwell-time was introduced in \cite{blanchini2010modal}, which provided switch-dependent dwell-time bounds, while \cite{kuiava2013practical} established boundedness of solutions via practical stability. The dwell-time bound of \cite{basar2010} was extended to discrete switched systems in \cite{motahar2016composing} and to continuous switched systems with invariant sets in \cite{dorothy2016switched}. Yet, papers that deal with multiple equilibria/invariant sets do not study the effect of switching in the presence of disturbances. Conversely, work that consider switching under disturbances is restricted to systems that share a common equilibrium. 
In the present paper, we address this gap in the literature by studying discrete and continuous switched systems with multiple equilibria under disturbances.

Our interest in switched systems with multiple equilibria stems from their application in certain motion planning and control problems in robotics that require switching among different modes of behavior~\cite{burridge1999sequential, tedrake2010}. As a concrete example, consider dynamically-stable legged robots, in which the ability to switch among a collection of limit-cycle gait primitives enriches the repertoire of robot behaviors. This ability provides the additional flexibility needed for navigating amidst obstacles~\cite{gregg-planning2012, motahar2016composing}, realizing gait transitions~\cite{QuCao2015, Cao2016ALR}, adapting to external commands~\cite{veer2017supervisory, quan2016dynamic}, or achieving robustness to disturbances~\cite{saglam2013switching, da2016first}. In this case, each limit-cycle gait primitive corresponds to a distinct equilibrium point of a discrete dynamical system that arises from the corresponding Poincar\'e map---or forced Poincar\'e map~\cite{veer2017poincare}  if disturbances are present. Hence, composing gait primitives can be formulated as a switched discrete system with multiple equilibria, as in~\cite{gregg-planning2012, motahar2016composing, veer2017driftless, veer2017continuum, bhounsule2018switching}. The present paper provides the theoretical tools necessary for ensuring robustness for such systems. It is worth mentioning that these tools can be applied to ensure robust motion planning via the composition of multiple (distinct) equilibrium behaviors in the context of other classes of dynamically-moving robots as well---examples include aerial robots with fixed~\cite{majumdar2017funnel} or flapping~\cite{ramezani2017describing} wings, snake robots \cite{liljeback2011controllability}, and ballbots \cite{nagarajan2014ballbot}.

This paper studies the effect of disturbances on switched discrete and continuous systems with multiple equilibria, where each subsystem is ISS. It is shown that if the switching signal satisfies an average dwell-time constraint---which is analytically computable in the case of quadratic Lyapunov functions---then, the solutions of the switched system are ultimately bounded within a compact set. Furthermore, the size of this set grows monotonically with the supremum norm of the disturbance. With respect to prior literature on switched systems with multiple equilibria, e.g.,~\cite{basar2010, blanchini2010modal, kuiava2013practical, motahar2016composing, dorothy2016switched, makarenkov2017dwell}, the results of this paper explicitly consider the effect of disturbances on the solutions of the switched system. In addition, they constitute a natural generalization of known results in the common equilibrium case; indeed, when the equilibria of the individual subsystems coalesce, ISS of the switched system can be recovered as a simple consequence of our main results, Theorems~\ref{thm:ISS-switch-discrete} and~\ref{thm:ISS-switch-continuous}; see Corollaries~\ref{cor:single-equ-disc} and~\ref{cor:single-equ-cont}.

The paper is organized as follows. Section~\ref{sec:GUUB-def} presents the main results for continuous and discrete switched systems with multiple equilibria. Section~\ref{sec:discrete_switched_system} presents explicit expressions for dwell-time bounds and the relevant set constructions that will be used in the proofs of the main results; the proofs are presented in Section~\ref{sec:proofs}. Section~\ref{sec:examples} considers important implementation aspects and presents numerical examples that illustrate the behaviors of interest in switched discrete and continuous systems. Section~\ref{sec:conclusions} concludes the paper.

\subsection*{Notation} $\mathbb{R}$ and $\mathbb{Z}$ denote the real and integer numbers, and $\mathbb{R}_+$, $\mathbb{Z}_+$ the non-negative reals and integers, respectively. The Euclidean norm is denoted by $\|\cdot\|$ and $B_\delta(x)\subset \mathbb{R}^n$ denotes an open-ball (Euclidean) of radius $\delta>0$ centered at $x\in\mathbb{R}^n$. Let $\mathcal{A}\subseteq \mathbb{R}^n$, then $\accentset{\circ}{\mathcal{A}}$ denotes the interior while $\overline{\mathcal{A}}$ denotes the closure of $\mathcal{A}$, respectively. 
The index $k\in\mathbb{Z}_+$ represents discrete time. The discrete-time disturbance $d: \mathbb{Z}_+ \to \mathbb{R}^m$ is a sequence $\{d_k\}_{k\in\mathbb{Z}_+}$ with $d_k\in\mathbb{R}^m$ for $k\in\mathbb{Z}_+$. The norm of $d$ is $\|d\|_\infty:= \sup_{k\in\mathbb{Z}_+}\|d_k\|$.
Let $t\in\mathbb{R}_+$ represent the continuous time. The disturbance $d:\mathbb{R}_+\to \mathbb{R}^m$ that acts in continuous time is assumed to be a piecewise continuous signal with norm $\|d\|_\infty := \sup_{t\geq 0} \|d(t)\|$. Abusing notation, we use $d$, $\|\cdot\|_\infty$ and $\mathcal{D}$ for both discrete- and continuous-time disturbances. No ambiguity arises as it will always be clear from context whether the signal is discrete or continuous.
Finally, a function $\alpha:\mathbb{R}_+ \to \mathbb{R}_+$ is of class $\mathcal{K}_\infty$ if it is continuous, strictly increasing, $\alpha(0)=0$, and $\lim_{s\to\infty} \alpha(s)=\infty$. A function $\beta:\mathbb{R}_+ \times \mathbb{R}_+ \to \mathbb{R}_+$ is of class $\mathcal{KL}$ if it is continuous, $\beta(\cdot,t)$ is of class $\mathcal{K}_\infty$ for any fixed $t\geq 0$, $\beta(s,\cdot)$ is strictly decreasing, and $\lim_{t\to \infty} \beta(s,t)=0$ for any fixed $s\geq 0$; see~\cite{khalil2002nonlinear} for class $\mathcal{K}_\infty$ and $\mathcal{KL}$ functions.

\section{Main Results}
\label{sec:GUUB-def}

This section introduces the classes of discrete and continuous switched systems that are of interest to this work, and provides the main theorems that establish boundedness of solutions under disturbances for sufficiently slow switching.

\subsection{Switched Discrete Systems} 
\label{subsec:system-description-discrete}

Let $\mathcal{P}$ be a finite index set and consider the family of discrete-time systems
\begin{align}\label{eq:system-family}
x_{k+1} = f_p(x_k, d_k), & & p\in\mathcal{P} \enspace,
\end{align}
where $x\in\mathbb{R}^n$ is the state of the system and $d_k \in \mathbb{R}^m$ is the value at time $k$ of the discrete disturbance signal $d$, which belongs to the set of \emph{bounded} disturbances $\mathcal{D}:=\{ d: \mathbb{Z}_+ \to \mathbb{R}^m ~|~\|d\|_\infty<\infty\}$.  It is assumed that, for each $p\in\mathcal{P}$, the mapping $f_p : \mathbb{R}^n\times\mathbb{R}^m \to \mathbb{R}^n$ is continuous in its arguments, and that there exists a unique $x_p^*\in\mathbb{R}^n$ which satisfies $x_p^* = f_p(x_p^*,0)$. Note that the vast majority of the relevant literature assumes that all subsystems $f_p$ share a common equilibrium point; here, we relax this assumption, and allow for $x_p^*\neq x_q^*$ when $p \neq q$. 

To state the main result, we will require each system in the family \eqref{eq:system-family} to be input-to-state stable, as defined below. 
\begin{definition}\label{def:ISS}
The system $f_p$ in \eqref{eq:system-family} is input-to-state stable (ISS) if there exists a class $\mathcal{KL}$ function $\beta$ and a class $\mathcal{K}_\infty$ function $\alpha$ such that for any initial state $x_0 \in \mathbb{R}^n$ and any bounded input $d \in \mathcal{D}$, the solution $x_k$ exists for all $k \geq 0$ and satisfies
\begin{equation}\label{eq:ISS_def}
\|x_k - x^*_p \| \leq \beta(\|x_0 - x^*_p \|, k) + \alpha(\|d\|_\infty) \enspace. 
\end{equation}
\end{definition}

Let $\sigma:\mathbb{Z}_+ \to \mathcal{P}$ be a switching signal, mapping the discrete time
$k$ to the index $\sigma(k)\in\mathcal{P}$ of the subsystem that is active at $k$. This gives rise to a discrete switched system of the form
\begin{align}\label{eq:switched-system}
x_{k+1} = f_{\sigma(k)}(x_k,d_k) \enspace.
\end{align}

We are interested in establishing boundedness and ultimate boundedness of the solutions of \eqref{eq:switched-system} under bounded disturbances, provided that the switching signal is sufficiently ``slow on average". Definition~\ref{def:average-dwell-time} below makes this notion precise.

\begin{definition}\label{def:average-dwell-time}
A switching signal $\sigma(k)$ has \emph{average dwell-time} $N_{\rm a}>0$ if the number $N_\sigma(k,\underline{k})\in\mathbb{Z}_+$ of switches over any discrete-time interval $[\underline{k},k)\cap \mathbb{Z}_+$ where $k,\underline{k}\in\mathbb{Z}_+$, satisfies
\begin{align} \label{eq:avg-dwell-time-def}
N_\sigma(k,\underline{k}) \leq N_0 + \frac{k-\underline{k}}{N_{\rm a}}\enspace, & & \forall k\geq \underline{k} \geq 0
\end{align}
where $N_0 >0$ is a finite constant.
\end{definition}

We can now state the main result of this section for discrete switched systems. 
\begin{theorem}\label{thm:ISS-switch-discrete}
Consider the switched system \eqref{eq:switched-system} and assume that for each $p \in \mathcal{P}$ there exists a continuous function $V_p:\mathbb{R}^n\to\mathbb{R}_+$ such that for all $x \in \mathbb{R}^n$ and $d\in\mathcal{D}$,
\begin{align}
\underline{\alpha}_p(\|x-x_p^*\|) \leq V_p(x) \leq \overline{\alpha}_p(\|x-x_p^*\|) \enspace, \label{eq:V-ISS-1}\\
V_p(x_{k+1}) \leq \lambda_p V_p(x_k) + \alpha_p(\|d\|_\infty) \label{eq:V-ISS-2-p}\enspace,
\end{align}
for any $k \geq 0$, where $0<\lambda_p<1$ and $\underline{\alpha}_p, \overline{\alpha}_p, \alpha_p$ are class $\mathcal{K}_\infty$ functions. Assume further that 
\begin{equation}\label{eq:Vratio_bound_discrete}
\limsup_{\|x-x^*_p\| \to \infty} \frac{V_q(x)}{V_p(x)} < \infty
\end{equation}
for any $p,q \in \mathcal{P}$. Then, there exists $\overline{N}_{\rm a} > 0$ so that for any switching signal $\sigma$ satisfying the average dwell-time constraint \eqref{eq:avg-dwell-time-def} with
\begin{equation} \label{eq:thm1_dwell_bound}
N_0 \geq 1 ~\mbox{and}~ N_{\rm a} \geq \overline{N}_{\rm a} \enspace, 
\end{equation}
and for any $x_0 \in \mathbb{R}^n$, there exists $K \in \mathbb{Z}_+$ such that the solution $\{x_k\}_{k \in \mathbb{Z}_+}$ of \eqref{eq:switched-system} satisfies:
\begin{enumerate}[(i)]
\item for all $0 \leq k < K$,
\begin{equation} \label{eq:transient_bound_discr}
\|x_k-x_{\sigma(k)}^*\| \leq \beta(\|x_0-x_{\sigma(0)}^*\|,k)+\alpha(\| d \|_\infty)
\end{equation}
for some $\beta\in\mathcal{KL}$ and $\alpha\in\mathcal{K}_\infty$;
\item for all $k \geq K$,
\begin{equation} \label{eq:ultimate_bound_discr}
x_k\in \mathcal{M}(\bar{\omega}) \!:=\!\! \bigcup_{p\in\mathcal{P}} \{x\in\mathbb{R}^n~|~V_p(x)\leq \bar{\omega}(\|d\|_\infty)\}
\end{equation}
where 
\begin{equation}\label{eq:omegabar-discrete}
\bar{\omega}(\|d\|_\infty):=c + \tilde{\alpha}(\|d\|_\infty) \enspace,
\end{equation}
for some $c>0$ and $\tilde{\alpha}\in\mathcal{K}_\infty$.
\end{enumerate} 
\end{theorem}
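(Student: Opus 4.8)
The plan is to adapt the average-dwell-time argument of Hespanha--Morse so as to simultaneously handle the external disturbance and the mismatch between the equilibria $x_p^*$; concretely, I would reduce the problem to a scalar recursion along the switching instants. Put $\lambda:=\max_{p\in\mathcal P}\lambda_p\in(0,1)$, $\hat\alpha:=\max_{p\in\mathcal P}\alpha_p\in\mathcal K_\infty$, and $\Delta(s):=\hat\alpha(s)/(1-\lambda)\in\mathcal K_\infty$. Iterating \eqref{eq:V-ISS-2-p} over any interval on which $\sigma$ is constant and equal to $p$ gives $V_p(x_k)\le\lambda^{k-\underline k}V_p(x_{\underline k})+\Delta(\|d\|_\infty)$. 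The nonstandard ingredient is a \emph{jump inequality} across a switch: since \eqref{eq:V-ISS-1} forces every sublevel set $\{V_p\le\omega\}$ to lie in the compact ball $\overline{B_{\underline\alpha_p^{-1}(\omega)}(x_p^*)}$, the quantity $\chi_{qp}(\omega):=\sup\{V_q(x): V_p(x)\le\omega\}$ is finite and non-decreasing, and \eqref{eq:Vratio_bound_discrete} forces $\limsup_{\omega\to\infty}\chi_{qp}(\omega)/\omega<\infty$; hence there exist constants $\mu\ge1$ and $\rho\ge0$, uniform over the finite set $\mathcal P$, such that $V_q(x)\le\mu V_p(x)+\rho$ for all $x\in\mathbb R^n$ and all $p,q\in\mathcal P$. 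This affine surrogate replaces the usual bound $V_q\le\mu V_p$, which is unavailable when $x_p^*\ne x_q^*$.

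Next I would chain these two estimates. Let $0=k_0<k_1<\cdots$ be the switching times, $p_i:=\sigma(k_i)$, $\tau_i:=k_{i+1}-k_i$, and $W_i:=V_{p_i}(x_{k_i})$. The two steps above give $W_{i+1}\le\mu\lambda^{\tau_i}W_i+\mu\Delta(\|d\|_\infty)+\rho$, hence by induction
\[
W_i\le\mu^{\,i}\lambda^{\,k_i}W_0+\bigl(\mu\Delta(\|d\|_\infty)+\rho\bigr)\sum_{l=0}^{i-1}\mu^{\,i-1-l}\lambda^{\,k_i-k_{l+1}} .
\]
Applying the average-dwell-time bound \eqref{eq:avg-dwell-time-def} on each subinterval $[k_{l+1},k_i)$ bounds its number of switches by $N_0+(k_i-k_{l+1})/N_{\rm a}$; consequently, provided $N_{\rm a}\ge\overline N_{\rm a}$ with $\overline N_{\rm a}$ large enough that $\bar\lambda:=\lambda\mu^{1/N_{\rm a}}<1$ (equivalently $\mu\lambda^{N_{\rm a}}<1$; note $\overline N_{\rm a}\sim\ln\mu/\ln(1/\lambda)$), one has $\mu^{\,i-1-l}\lambda^{\,k_i-k_{l+1}}\le\mu^{N_0}(\mu\lambda^{N_{\rm a}})^{(i-1-l-N_0)_+}$, so the sum is bounded by a constant $C$ independent of $i$, while $\mu^{\,i}\lambda^{\,k_i}\le\mu^{N_0}\bar\lambda^{\,k_i}$. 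Substituting back into the within-mode estimate on $[k_i,k)$ and using $\lambda\le\bar\lambda$ yields, for every $k\ge0$,
\[
V_{\sigma(k)}(x_k)\le\mu^{N_0}\bar\lambda^{\,k}\,V_{\sigma(0)}(x_0)+\bar\omega(\|d\|_\infty),\qquad \bar\omega(s):=c+\tilde\alpha(s),
\]
with $c>0$ a constant multiple of $\rho$ and $\tilde\alpha$ a $\mathcal K_\infty$ multiple of $\Delta$ — precisely the level defining $\mathcal M(\bar\omega)$ in \eqref{eq:ultimate_bound_discr}--\eqref{eq:omegabar-discrete}.

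Finally I would pass back to the state norm through \eqref{eq:V-ISS-1}, with $\underline\alpha:=\min_p\underline\alpha_p$ and $\overline\alpha:=\max_p\overline\alpha_p$ (both $\mathcal K_\infty$). Let $K$ be the first instant at which the exponentially decaying term $\mu^{N_0}\bar\lambda^{\,k}V_{\sigma(0)}(x_0)$ falls below the constant part $c$ of $\bar\omega$ (with $K=0$ if this already holds at $k=0$); $K$ is finite and $K=O(\log)$. For $k\ge K$ the displayed bound gives $V_{\sigma(k)}(x_k)\le\bar\omega(\|d\|_\infty)$, i.e. $x_k\in\mathcal M(\bar\omega)$, which is (ii). For $0\le k<K$ the decaying term dominates $c$, so after the routine splitting $\underline\alpha^{-1}(a+b)\le\underline\alpha^{-1}(2a)+\underline\alpha^{-1}(2b)$ the standalone constant is absorbed into the decaying term, and one reads off (i) with, e.g., $\beta(s,k):=2\,\underline\alpha^{-1}\!\bigl(3\mu^{N_0}\bar\lambda^{\,k}\overline\alpha(s)\bigr)\in\mathcal{KL}$ and $\alpha(s):=\underline\alpha^{-1}\!\bigl(3\tilde\alpha(s)\bigr)\in\mathcal K_\infty$.

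I expect the main obstacle to be the jump inequality of the first paragraph and the bookkeeping it triggers: extracting $V_q\le\mu V_p+\rho$ from \eqref{eq:Vratio_bound_discrete} is what makes multiple equilibria tractable, but the additive constant $\rho$ then propagates through every switch, so the crux of the estimate is showing that the accumulated $\rho$- and disturbance-terms sum to something uniformly bounded in the switch count — which is exactly what forces the dwell-time threshold $\overline N_{\rm a}\sim\ln\mu/\ln(1/\lambda)$ — and then checking that this residual constant does not spoil the clean $\mathcal{KL}+\mathcal K_\infty$ form of the transient bound, which it does not because on $[0,K)$ it is dominated by the decaying term.
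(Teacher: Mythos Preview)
Your approach is correct and genuinely different from the paper's. The decisive divergence is the jump inequality. The paper fixes a level $\kappa>0$, defines $\mu(\kappa):=\max_{p,q}\sup_{x\notin\accentset{\circ}{\mathcal M}_p(\kappa)}V_q(x)/V_p(x)$, and works with the \emph{multiplicative} bound $V_q\le\mu(\kappa)V_p$, which is valid only while the trajectory stays outside the $\kappa$-sublevel sets. This forces the paper to track the first switching instant $k_N$ at which $x_{k_N}\in\accentset{\circ}{\mathcal M}_{\sigma(k_N-1)}(\kappa)$ (Lemma~\ref{lem:V-sigma}), and then to close the argument by a case split---Case~(a) $V_{\sigma(0)}(x_0)\le\omega(\kappa)$ versus Case~(b) $V_{\sigma(0)}(x_0)>\omega(\kappa)$---with a restart at $k_N$ using the auxiliary quantity $\omega(\kappa):=\max_p\max_{x\in\mathcal M(\kappa)}V_p(x)$; the constant $c$ in \eqref{eq:omegabar-discrete} emerges as $\mu(\kappa)^{N_0}\omega(\kappa)$. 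Your \emph{affine} bound $V_q\le\mu V_p+\rho$, which you correctly extract from \eqref{eq:Vratio_bound_discrete} together with compactness, is valid on all of $\mathbb R^n$; the additive $\rho$ then propagates through the chained recursion exactly like the disturbance term and is controlled by the same geometric sum, so no case analysis or restart is needed and the single estimate $V_{\sigma(k)}(x_k)\le\mu^{N_0}\bar\lambda^{\,k}V_{\sigma(0)}(x_0)+\bar\omega(\|d\|_\infty)$ holds for all $k\ge0$. What your route buys is a noticeably shorter and more transparent proof; what the paper's route buys is an explicit design parameter $\kappa$ that feeds into closed-form expressions for $\overline N_{\rm a}$ and $\bar\omega$ (and, via Proposition~\ref{prop:mu-compute}, computable bounds in the quadratic case), with $\mu(\kappa)$ typically smaller than your global $\mu$ and hence a less conservative dwell-time threshold.

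Two minor points to tidy in your write-up. First, with your definition of $K$ you get $V_{\sigma(k)}(x_k)\le c+\bar\omega(\|d\|_\infty)=2c+\tilde\alpha(\|d\|_\infty)$ for $k\ge K$, not $\bar\omega$ itself; simply absorb the factor of two into $c$ (the theorem only asks for \emph{some} $c>0$). Second, your bound $\mu^{\,i-1-l}\lambda^{\,k_i-k_{l+1}}\le\mu^{N_0}(\mu\lambda^{N_{\rm a}})^{(i-1-l-N_0)_+}$ is correct but it is cleaner to use the ADT directly as $\mu^{\,i-1-l}\lambda^{\,k_i-k_{l+1}}\le\mu^{N_0}\bar\lambda^{\,k_i-k_{l+1}}$ and then sum the resulting geometric series in the (distinct, integer) exponents $k_i-k_{l+1}$; this matches the paper's bookkeeping and avoids the $(\cdot)_+$.
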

\noindent Proving Theorem~\ref{thm:ISS-switch-discrete} is postponed until Section~\ref{subsec:disc-proof}. Explicit expressions for the bound $\overline{N}_{\rm a}$ on the average dwell-time $N_{\rm a}$, and for $\bar{\omega}$ in the characterization of the set $\mathcal{M}(\bar{\omega})$ are important in applications, and are given before proofs; see Section~\ref{subsec:bounds-discrete}.

Let us now briefly discuss some aspects of Theorem~\ref{thm:ISS-switch-discrete}. First, note that by \cite[Definition~3.2]{jiang2001input}, conditions \eqref{eq:V-ISS-1}-\eqref{eq:V-ISS-2-p} imply that $V_p$ is an ISS-Lyapunov function for the $p$-th subsystem, which by \cite[Lemma~3.5]{jiang2001input}, entails that $f_p$ is ISS. 
Since we are interested in switching among systems from the family \eqref{eq:system-family}, condition \eqref{eq:Vratio_bound_discrete} is added to ensure that the ratio of the Lyapunov functions corresponding to the subsystems involved in switching is bounded.    
With conditions \eqref{eq:V-ISS-1}-\eqref{eq:V-ISS-2-p} and \eqref{eq:Vratio_bound_discrete} in place, Theorem~\ref{thm:ISS-switch-discrete} states that if each system in the family \eqref{eq:system-family} is ISS, the solution of \eqref{eq:switched-system} is uniformly\footnote{The term ``uniformly'' refers to uniformity over the set of switching signals that satisfy \eqref{eq:avg-dwell-time-def} for $N_0 \geq 1$ and $N_{\rm a} \geq \overline{N}_{\rm a}$, as required by \eqref{eq:thm1_dwell_bound} of Theorem~\ref{thm:ISS-switch-discrete}.} bounded and uniformly ultimately bounded within the compact set $\mathcal{M}(\bar{\omega})$ characterized by \eqref{eq:ultimate_bound_discr}. Furthermore, \eqref{eq:ultimate_bound_discr} indicates that the ``size" of $\mathcal{M}(\bar{\omega})$ reduces proportionally with the norm $\|d\|_\infty$ of the disturbance. Note, however, that $\mathcal{M}(\bar{\omega})$ does \emph{not} collapse to a point when the disturbance signal $d$ vanishes; indeed, if $d=0$, \eqref{eq:omegabar-discrete} implies that $\bar{\omega}(0)=c>0$ and the solutions of \eqref{eq:switched-system} are ultimately bounded to the zero-input compact set $\mathcal{M}(c)$ that contains the equilibria $x^*_p \in \mathcal{M}(c), ~\forall p \in \mathcal{P}$.

It should be noted that Theorem~\ref{thm:ISS-switch-discrete} does not establish ISS  for \eqref{eq:switched-system}, since the compact set $\mathcal{M}(c)$ is not positively invariant\footnote{In the terminology of \cite{sontag1996new}, $\mathcal{M}(\bar{\omega})$ is not a zero-invariant set for \eqref{eq:switched-system}.} under the zero-input dynamics of \eqref{eq:switched-system}. In fact, for suitable switching signals satisfying the requirements of the theorem, solutions of \eqref{eq:switched-system} can be found that start within $\mathcal{M}(c)$ and---while evolving in the absence of the disturbance---escape from $\mathcal{M}(c)$ before they return to $\mathcal{M}(c)$ and be trapped forever in it; see also Remark~\ref{rem:invariance} for how this behavior can emerge.
Note also that although the estimate \eqref{eq:transient_bound_discr} is reminiscent of \eqref{eq:ISS_def} in Definition~\ref{def:ISS}, it extends only up to a finite integer $K$, and it does not represent point-to-set distance from $\mathcal{M}(c)$ as establishing set-ISS for \eqref{eq:switched-system} would require. However, when all the subsystems in the family \eqref{eq:system-family} share the same equilibrium, then ISS can be recovered, as the following corollary shows. Corollary~\ref{cor:single-equ-disc} provides the counterpart of \cite[Theorem~3.1]{vu2007input} for discrete switched systems. 
\begin{corollary}\label{cor:single-equ-disc}
Consider \eqref{eq:switched-system} with $x_p^*=0$ for all $p\in\mathcal{P}$. Let the assumptions of Theorem~\ref{thm:ISS-switch-discrete} hold, and further assume that
\begin{equation}\label{eq:Vratio_lb_discrete}
\limsup_{\|x\| \to 0} \frac{V_q(x)}{V_p(x)} < \infty
\end{equation}
for all $p,q\in\mathcal{P}$. Then, the system \eqref{eq:switched-system} is ISS.
\end{corollary}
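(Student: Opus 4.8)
The plan is \emph{not} to invoke the ultimate-boundedness conclusion of Theorem~\ref{thm:ISS-switch-discrete} directly — its bound carries the residual constant $c>0$ from \eqref{eq:omegabar-discrete}, which would obstruct ISS — but to rerun the average-dwell-time estimate underlying that proof in the sharper form that the common-equilibrium hypothesis \eqref{eq:Vratio_lb_discrete} makes available.

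\emph{Step 1 (global bound on the Lyapunov ratios).} Since $x_p^*=0$, the lower bound in \eqref{eq:V-ISS-1} gives $V_p(x)\ge\underline{\alpha}_p(\|x\|)>0$ for every $x\neq 0$, so each map $x\mapsto V_q(x)/V_p(x)$ is continuous on $\mathbb{R}^n\setminus\{0\}$. It is bounded on every closed annulus $\{r\le\|x\|\le R\}$ by compactness, bounded near the origin by \eqref{eq:Vratio_lb_discrete}, and bounded near infinity by \eqref{eq:Vratio_bound_discrete}; maximizing the resulting bounds over the finite set $\mathcal{P}$ yields a constant $\mu\ge 1$ with $V_q(x)\le\mu\,V_p(x)$ for all $x\in\mathbb{R}^n$ and all $p,q\in\mathcal{P}$.

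\emph{Step 2 (dwell-time contraction).} Put $\lambda:=\max_p\lambda_p\in(0,1)$ and $\chi(s):=\max_p\bigl(\alpha_p(s)/(1-\lambda_p)\bigr)\in\mathcal{K}_\infty$. Iterating \eqref{eq:V-ISS-2-p} over an interval $[k_i,k_{i+1})$ of constancy of $\sigma$ gives $V_{\sigma(k_i)}(x_k)\le\lambda^{k-k_i}V_{\sigma(k_i)}(x_{k_i})+\chi(\|d\|_\infty)$ for $k_i\le k\le k_{i+1}$, while $V_{\sigma(k_{i+1})}(x_{k_{i+1}})\le\mu\,V_{\sigma(k_i)}(x_{k_{i+1}})$ at the switch. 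Writing $W_k:=V_{\sigma(k)}(x_k)$ and telescoping across all switches in $[0,k)$, whose count $N_\sigma(k,0)$ obeys \eqref{eq:avg-dwell-time-def}, one obtains
\[
W_k \ \le\ \mu^{N_\sigma(k,0)}\lambda^{k}\,W_0 \ +\ \chi(\|d\|_\infty)\sum_{j\ge 0}\mu^{j}\lambda^{\ell_j},
\]
where $\ell_j$ is the number of steps elapsed since the $j$-th most recent switch. Because $N_\sigma(k,0)\le N_0+k/N_{\rm a}$, the homogeneous term is at most $\mu^{N_0}(\mu^{1/N_{\rm a}}\lambda)^{k}W_0$; because any $j$ switches span at least $N_{\rm a}(j-N_0)$ steps, $\lambda^{\ell_j}\le\lambda^{-N_{\rm a}N_0}(\lambda^{N_{\rm a}})^{j}$ for $j\ge N_0$, so the series converges. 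Both facts need $\mu^{1/N_{\rm a}}\lambda<1$, i.e. $N_{\rm a}>\ln\mu/\ln(1/\lambda)$; after possibly enlarging $\overline{N}_{\rm a}$ so that the requirement \eqref{eq:thm1_dwell_bound} implies this, we obtain $\theta\in(0,1)$ and a finite constant $S$ with $W_k\le\mu^{N_0}\theta^k W_0+S\,\chi(\|d\|_\infty)$ for all $k\ge 0$, uniformly over admissible $\sigma$.

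\emph{Step 3 (back to the state), and the obstacle.} With $\underline{\alpha}:=\min_p\underline{\alpha}_p$ and $\overline{\alpha}:=\max_p\overline{\alpha}_p$ in $\mathcal{K}_\infty$, \eqref{eq:V-ISS-1} gives $W_0\le\overline{\alpha}(\|x_0\|)$ and $\|x_k\|\le\underline{\alpha}^{-1}(W_k)$; the weak triangle inequality $\underline{\alpha}^{-1}(a+b)\le\underline{\alpha}^{-1}(2a)+\underline{\alpha}^{-1}(2b)$ then yields $\|x_k\|\le\beta(\|x_0\|,k)+\alpha(\|d\|_\infty)$ with $\beta(s,k):=\underline{\alpha}^{-1}\bigl(2\mu^{N_0}\theta^k\overline{\alpha}(s)\bigr)\in\mathcal{KL}$ and $\alpha(s):=\underline{\alpha}^{-1}\bigl(2S\chi(s)\bigr)\in\mathcal{K}_\infty$, which is exactly ISS of \eqref{eq:switched-system}. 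I expect Step~1 to be the only genuinely non-routine part: promoting the two \emph{local} ratio bounds (near $0$ and near $\infty$) to a single \emph{global} constant $\mu$ is precisely what the new hypothesis \eqref{eq:Vratio_lb_discrete} supplies, and without a uniform jump factor at switches the telescoping estimate of Step~2 — the discrete-time counterpart of the argument in \cite{vu2007input} — would not close.
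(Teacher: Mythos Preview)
Your proposal is correct and follows essentially the same route as the paper. The paper's proof is simply a one-line invocation of Lemma~\ref{lem:V-sigma}: it observes that hypothesis \eqref{eq:Vratio_lb_discrete} lets $\mu$ be defined over all of $\mathbb{R}^n\setminus\{0\}$ (your Step~1), so the index $N$ in \eqref{eq:inf-to-kappa} is never finite, whence $k_N\to\infty$ and the estimate \eqref{eq:beta-alpha-bound-disc} of Lemma~\ref{lem:V-sigma} holds for all $k\ge 0$---which is exactly ISS once $x_p^*=0$. Your Steps~2--3 reproduce the content of Lemma~\ref{lem:V-sigma} and its conversion to \eqref{eq:beta-alpha-bound-disc} rather than citing it, but the mechanism (global jump factor $\mu$ $\Rightarrow$ telescoped dwell-time bound valid for all $k$ $\Rightarrow$ ISS) is identical.
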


While the set $\mathcal{M}(\bar{\omega})$ in Theorem~\ref{thm:ISS-switch-discrete} is not positively invariant, one can identify a (compact) subset of initial conditions in $\mathcal{M}(c)$ such that the corresponding solutions never leave $\mathcal{M}(\bar{\omega})$. This property corresponds to the notion of \emph{practical stability} in the terminology of~\cite{kuiava2013practical}, and is important in certain motion planning applications; e.g., see~\cite{motahar2016composing, gregg-planning2012} for planning motions with limit-cycle walking bipedal robots. This is made precise by the following corollary.
\begin{corollary}\label{cor:trap-set}
Under the assumptions of Theorem~\ref{thm:ISS-switch-discrete}, there exists a non-empty compact set $\mathcal{S} \subset \mathcal{M}(c)$ such that $x_0 \in \mathcal{S}$, $d\in\mathcal{D}$ imply $x_k \in \mathcal{M}(\bar{\omega})$ for all $k\geq 0$.
\end{corollary}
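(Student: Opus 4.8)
\emph{Proof outline (plan).} The plan is to take $\mathcal{S}$ to be a sufficiently small sublevel set of one of the $V_p$'s --- equivalently, a small closed ball around one equilibrium --- and to show, via the transient estimate~\eqref{eq:transient_bound_discr} of Theorem~\ref{thm:ISS-switch-discrete}(i) and the ratio condition~\eqref{eq:Vratio_bound_discrete}, that every solution issued from $\mathcal{S}$ stays in $\mathcal{M}(\bar\omega)$ throughout the transient phase $0\leq k<K$; the times $k\geq K$ are covered directly by~\eqref{eq:ultimate_bound_discr}. The first ingredient, already isolated in Section~\ref{subsec:bounds-discrete}, is that~\eqref{eq:Vratio_bound_discrete} --- which bounds $V_q/V_p$ outside a large ball --- together with continuity of $V_q$ and~\eqref{eq:V-ISS-1} --- which bound $V_q$ on that ball --- produces constants $\mu\geq 1$ and $\kappa\geq 0$ with $V_q(x)\leq\mu V_p(x)+\kappa$ for all $p,q\in\mathcal{P}$ and all $x\in\mathbb{R}^n$.

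I would then fix $p_0\in\mathcal{P}$ and set $\mathcal{S}:=\{x\in\mathbb{R}^n~|~V_{p_0}(x)\leq r\}$ for a radius $r\geq 0$ to be chosen. By~\eqref{eq:V-ISS-1} this set is closed and bounded, hence compact, it contains $x_{p_0}^*$, hence is non-empty, and $\mathcal{S}\subseteq\{V_{p_0}\leq c\}\subseteq\mathcal{M}(c)$ provided $r\leq c$. Now take $x_0\in\mathcal{S}$, any switching signal satisfying~\eqref{eq:avg-dwell-time-def}--\eqref{eq:thm1_dwell_bound}, and any $d\in\mathcal{D}$. The ratio bound gives $V_{\sigma(0)}(x_0)\leq\mu r+\kappa$ irrespective of the initial mode $\sigma(0)$, hence by~\eqref{eq:V-ISS-1}, $\|x_0-x_{\sigma(0)}^*\|\leq\rho(\mu r+\kappa)$ with $\rho:=\max_{p\in\mathcal{P}}\underline{\alpha}_p^{-1}\in\mathcal{K}_\infty$. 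For $k\geq K$ we are done by~\eqref{eq:ultimate_bound_discr}; for $0\leq k<K$, using~\eqref{eq:transient_bound_discr}, the monotonicity of $\beta$ in its second argument, and the fact that $\beta,\alpha$ are common to all admissible switching signals, $\|x_k-x_{\sigma(k)}^*\|\leq\beta(\rho(\mu r+\kappa),0)+\alpha(\|d\|_\infty)$; then the upper bound in~\eqref{eq:V-ISS-1} together with the splitting $\overline{\alpha}_{\sigma(k)}(a+b)\leq\overline{\alpha}_{\sigma(k)}(2a)+\overline{\alpha}_{\sigma(k)}(2b)$ yields $V_{\sigma(k)}(x_k)\leq\bar\rho(2\beta(\rho(\mu r+\kappa),0))+\bar\rho(2\alpha(\|d\|_\infty))$ with $\bar\rho:=\max_{p\in\mathcal{P}}\overline{\alpha}_p\in\mathcal{K}_\infty$. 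It remains to pick $r\geq 0$ so small that $\bar\rho(2\beta(\rho(\mu r+\kappa),0))\leq c$: since the left-hand side is a $\mathcal{K}_\infty$ function of $r$, this is possible with some $r>0$ when $c>\bar\rho(2\beta(\rho(\kappa),0))$, and with $r=0$ --- so that $\mathcal{S}=\{x_{p_0}^*\}$, still a non-empty compact set --- in the limiting case. Since the $\bar\omega=c+\tilde\alpha$ built in Section~\ref{subsec:bounds-discrete} is designed so that $\tilde\alpha$ dominates the residual gain $\bar\rho(2\alpha(\cdot))$, this gives $V_{\sigma(k)}(x_k)\leq c+\tilde\alpha(\|d\|_\infty)=\bar\omega(\|d\|_\infty)$, hence $x_k\in\mathcal{M}(\bar\omega)$ for every $k\geq 0$.

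The step I expect to be the main obstacle is the last one: confirming that the $c$ delivered by the construction of Section~\ref{subsec:bounds-discrete} is indeed large enough to dominate $\bar\rho(2\beta(\rho(\kappa),0))$ --- equivalently, large enough to absorb the worst-case transient overshoot of a trajectory that starts near $x_{p_0}^*$ but, because $N_0\geq 1$ permits an immediate switch, is first inflated by the ratio constant $\mu$ before the geometric decay of the active Lyapunov function takes over. Making this quantitative requires reading the explicit formula for $\bar\omega$ out of Section~\ref{subsec:bounds-discrete} and checking an inequality among $c$, $\kappa$, and the functions $\underline{\alpha}_p,\overline{\alpha}_p,\beta$; conceptually it is exactly the reason why $\mathcal{S}$ must be strictly smaller than a generic neighborhood of an equilibrium, consistent with the non-invariance of $\mathcal{M}(c)$ noted after Theorem~\ref{thm:ISS-switch-discrete} and in Remark~\ref{rem:invariance}. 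An equivalent but heavier alternative is to re-run the proof of Theorem~\ref{thm:ISS-switch-discrete} at the Lyapunov level so as to obtain, for all $k\geq 0$, the single master estimate $\min_{p\in\mathcal{P}}V_p(x_k)\leq\mu^{N_0+1}\xi^k V_{\sigma(0)}(x_0)+\Gamma(\|d\|_\infty)+c_0$ (with $\xi\in(0,1)$ because $N_{\rm a}\geq\overline{N}_{\rm a}$, $\Gamma\in\mathcal{K}_\infty$ with $\Gamma\leq\tilde\alpha$, and $0<c_0\leq c$), and then to take $\mathcal{S}=\bigcap_{p\in\mathcal{P}}\{V_p\leq(c-c_0)/\mu^{N_0+1}\}$; the non-emptiness of this $\mathcal{S}$ then rests on the same kind of inequality, $c-c_0\geq\mu^{N_0+1}\min_{x\in\mathbb{R}^n}\max_{p\in\mathcal{P}}V_p(x)$.
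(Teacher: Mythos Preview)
Your plan goes through the transient norm estimate~\eqref{eq:transient_bound_discr} of part~(i) and then converts back to a Lyapunov bound via $\overline{\alpha}_{\sigma(k)}$. This round trip from $V$ to $\|\cdot\|$ and back is where the argument breaks. The paper's explicit formulas in Section~\ref{subsec:bounds-discrete} give $\tilde\alpha(s)=\frac{\mu^{N_0}}{1-\delta}\hat\alpha(s)$, while the $\alpha$ in~\eqref{eq:transient_bound_discr} is built in Lemma~\ref{lem:V-sigma} as $\alpha(s)=\max_p\underline\alpha_p^{-1}\big(\frac{2\mu^{N_0}}{1-\delta}\hat\alpha(s)\big)$. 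Your required inequality $\bar\rho(2\alpha(\cdot))\leq\tilde\alpha(\cdot)$ then reads
\[
\max_p\overline\alpha_p\Big(2\max_q\underline\alpha_q^{-1}\big(\tfrac{2\mu^{N_0}}{1-\delta}\hat\alpha(s)\big)\Big)\;\leq\;\tfrac{\mu^{N_0}}{1-\delta}\hat\alpha(s),
\]
which fails already for quadratic $V_p$ (take $\underline\alpha_p(r)=r^2$, $\overline\alpha_p(r)=2r^2$; the left side is $16$ times the right). The same inflation affects the state-dependent term, which is exactly the ``main obstacle'' you flag: $c=\mu^{N_0}\omega$ need not dominate $\bar\rho(2\beta(\rho(\kappa),0))$. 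So neither the disturbance part nor the constant part of your bound can be closed with the constants the theorem actually delivers.

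The paper avoids this entirely by never leaving the Lyapunov level. It takes $\mathcal{S}:=\bigcap_{p\in\mathcal{P}}\mathcal{M}_p(\omega)$, which is non-empty because $\mathcal{M}(\kappa)\subseteq\mathcal{S}$ by Remark~\ref{rem:kappa-intersect}, and is contained in $\mathcal{M}(c)$ since $\omega\leq\mu^{N_0}\omega=c$. For any $x_0\in\mathcal{S}$ one has $V_{\sigma(0)}(x_0)\leq\omega$ regardless of $\sigma(0)$, so Case~(a) of the proof of Theorem~\ref{thm:ISS-switch-discrete} applies and gives $K=0$ directly; there is no transient phase to control. Your ``heavier alternative'' is heading in this direction, but the intersection you propose, $\bigcap_p\{V_p\leq(c-c_0)/\mu^{N_0+1}\}$, still requires a separate non-emptiness check, whereas $\bigcap_p\mathcal{M}_p(\omega)$ is non-empty for free by the very definition~\eqref{eq:omega} of $\omega$.
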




\subsection{Switched Continuous Systems}

As in Section~\ref{subsec:system-description-discrete}, let $\mathcal{P}$ be a finite index set and consider the family of continuous-time systems
\begin{align}\label{eq:cont-family}
\dot{x}(t) = f_p(x(t),d(t)), & & p\in\mathcal{P}\enspace,
\end{align}
where $x\in\mathbb{R}^n$ is the state of the system and $d(t)\in\mathbb{R}^m$ is the value of the continuous-time disturbance signal $d$ at time $t$ which belongs to the set of \emph{bounded} disturbances $\mathcal{D}:=\{d:\mathbb{R}_+\to\mathbb{R}^m~|~\|d\|_\infty<\infty,~d~{\rm piecewise~continuous}\}$. It is assumed that, for each $p\in\mathcal{P}$, the vector field $f_p:\mathbb{R}^n\times\mathbb{R}^m \to \mathbb{R}^n$ is locally Lipschitz in its arguments, and that there exists a unique $x_p^* \in \mathbb{R}^n$ with $0=f_p(x_p^*,0)$. As in Section~\ref{subsec:system-description-discrete}, we allow for $x_p^*\neq x_q^*$ when $p \neq q$.

Analogous to Section~\ref{subsec:system-description-discrete}, we will require each system in the family \eqref{eq:cont-family} to be input-to-state stable, as defined below. 
\begin{definition}\label{def:ISS_cont}
The system $f_p$ in \eqref{eq:cont-family} is ISS if there exist a class $\mathcal{KL}$ function $\beta$ and a class $\mathcal{K}_\infty$ function $\alpha$ such that for any initial state $x(0) \in \mathbb{R}^n$ and any bounded input $d \in \mathcal{D}$ the solution $x(t)$ exists for all $t \geq 0$ and satisfies
\begin{equation}\label{eq:ISS_def_cont}
\|x(t) - x^*_p \| \leq \beta(\|x(0) - x^*_p \|, t) + \alpha(\|d\|_\infty) \enspace. 
\end{equation}
\end{definition}

Let $\sigma:\mathbb{R}_+\to\mathcal{P}$ be a switching signal mapping the time instant $t$ to the index $\sigma(t)\in\mathcal{P}$ of the subsystem that is active at $t$. It is assumed that $\sigma(t)$ is right-continuous. The switching signal gives rise to the continuous-time switched system
\begin{align}\label{eq:switched-system-cont}
\dot{x}(t) = f_{\sigma(t)}(x(t),d(t)) \enspace.
\end{align}
The solution $x(t) := \phi(t,x(0),\sigma(t),d(t))$ of \eqref{eq:switched-system-cont} is a sequential concatenation of each subsystem's solution as governed by the switching signal. Let $\{t_n\}_{n=1}^\infty$ with $t_n\in\mathbb{R}_+$ be a strictly monotonically increasing sequence of switching times. Clearly, continuity of $f_p(x,d)$ and piecewise continuity of $d(t)$ imply that $x(t)$ is continuous over $(t_n, t_{n+1})$, i.e. between subsequent switches. Furthermore, for any $t_n$, the subsystem $f_{\sigma(t_n)}$ that is switched in and is  active over $[t_n, t_{n+1})$ is initialized by $x(t_n) = \lim_{t\nearrow t_n} x(t)$ ensuring that $x(t)$ is continuous at $t_n$. Hence, $x(t)$ is continuous for all $t\geq 0$.

As in the discrete-time case, the main result of this section is stated for switching signals $\sigma$ with sufficiently slow switching on average; the following definition formalizes this notion. 

\begin{definition}\label{def:average-dwell-time-cont}
A switching signal $\sigma(t)$ has \emph{average dwell-time} $N_{\rm a}>0$ if the number $N_\sigma(t,\underline{t})\in\mathbb{Z}_+$ of switches over any interval $[\underline{t},t)\subset\mathbb{R}_+$ satisfies
\begin{align} \label{eq:avg-dwell-time-def-cont}
N_\sigma(t,\underline{t}) \leq N_0 + \frac{t-\underline{t}}{N_{\rm a}}\enspace, & & \forall t\geq \underline{t} \geq 0
\end{align}
where $N_0 >0$ is a finite constant.
\end{definition}

We are now in a position to state the main result of this section for continuous switched systems. 

\begin{theorem}\label{thm:ISS-switch-continuous}
Consider the switched system \eqref{eq:switched-system-cont} and assume that for each $p \in \mathcal{P}$ there exists a continuously differentiable function $V_p:\mathbb{R}^n\to\mathbb{R}_+$ such that for all $x \in \mathbb{R}^n$ and $d\in\mathcal{D}$,
\begin{align}
\underline{\alpha}_p(\|x-x_p^*\|) \leq V_p(x) \leq \overline{\alpha}_p(\|x-x_p^*\|) \enspace, \label{eq:V-ISS-p-1-cont} \\
\frac{\partial V_p}{\partial x} f_p(x,d) \leq -\lambda_p V_p(x) + \alpha_p(\|d\|_\infty) \enspace, \label{eq:V-ISS-p-2-cont}
\end{align}
for any $t \geq 0$, where $\lambda_p>0$ and $\underline{\alpha}_p, \overline{\alpha}_p, \alpha_p$ are class $\mathcal{K}_\infty$ functions. Assume further that 
\begin{equation}\label{eq:Vratio_bound_continuous}
\limsup_{\|x-x^*_p\| \to \infty} \frac{V_q(x)}{V_p(x)} < \infty
\end{equation}
for $p,q \in \mathcal{P}$. 
Then, there exists $\overline{N}_{\rm a} \!>\! 0$ so that for any switching signal $\sigma$ satisfying the average dwell-time constraint \eqref{eq:avg-dwell-time-def-cont} with
\begin{equation}\label{eq:thm2_dwell_bound}
 N_0 \geq 1 ~\mbox{and}~ N_{\rm a} \geq \overline{N}_{\rm a} \enspace, 
 \end{equation}
and for any $x(0) \in \mathbb{R}^n$, there exists $T \in \mathbb{R}_+$ such that the solution $x(t):=\phi(t,x(0),\sigma(t),d(t))$ of \eqref{eq:switched-system-cont} satisfies: 
\begin{enumerate}[(i)]
\item for all $0 \leq t < T$,
\begin{equation}\label{eq:transient_bound_cont}
\|x(t)-x_{\sigma(k)}^*\| \leq \beta(\|x(0)-x_{\sigma(0)}^*\|,t)+\alpha(\| d \|_\infty)
\end{equation}
for some $\beta\in\mathcal{KL}$, $\alpha\in\mathcal{K}_\infty$;
\item for all $t \geq T$,
\begin{equation} \label{eq:ultimate_bound_cont}
x(t) \!\in\! \mathcal{M}(\bar{\omega}) \!:=\!\! \bigcup_{p\in\mathcal{P}} \{x\in\mathbb{R}^n\!~|~\!V_p(x)\leq \bar{\omega}(\|d\|_\infty)\}
\end{equation}
where 
\begin{equation}\label{eq:omegabar-continuous}
\bar{\omega}(\|d\|_\infty):=c + \tilde{\alpha}(\|d\|_\infty) \enspace.
\end{equation}
for some $c>0$ and $\tilde{\alpha}\in\mathcal{K}_\infty$. 
\end{enumerate} 
\end{theorem}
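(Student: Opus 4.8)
The plan is to transplant the average-dwell-time argument of \cite{hespanha1999stability}, in the ISS form of \cite{vu2007input}, to the present multiple-equilibria setting, following the same route as the proof of Theorem~\ref{thm:ISS-switch-discrete} in Section~\ref{subsec:disc-proof} but with the comparison lemma applied to \eqref{eq:V-ISS-p-2-cont} playing the role that iteration of \eqref{eq:V-ISS-2-p} plays in the discrete case. Along a solution I would track $W(t):=V_{\sigma(t)}(x(t))$: between switches $W$ decays exponentially (modulo a disturbance floor), and the only new phenomenon, caused by $x_p^\ast\neq x_q^\ast$, is that $W$ can \emph{jump up} at a switching instant. Bounding this jump is exactly what \eqref{eq:Vratio_bound_continuous} is for, and because that condition controls $V_q/V_p$ only far from $x_p^\ast$, the jump bound---and hence the final estimate---will carry an irreducible offset, which is the constant $c$ appearing in $\bar\omega$.

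First I would fix uniform constants: $\lambda:=\min_p\lambda_p>0$ and $\alpha_{\rm d}\in\mathcal K_\infty$ dominating every $\alpha_p$. From \eqref{eq:Vratio_bound_continuous} and the lower bound in \eqref{eq:V-ISS-p-1-cont} one extracts $\mu\ge 1$, $R>0$ such that, with $\Omega_p(\omega):=\{V_p\le\omega\}$ and $\omega^\ast:=\max_p\overline\alpha_p(R)$, one has $V_q(x)\le\mu V_p(x)$ whenever $x\notin\Omega_p(\omega^\ast)$; from \eqref{eq:V-ISS-p-1-cont} one also gets that each $\Omega_p(\omega)$ is bounded and that $x\in\Omega_p(\omega^\ast)$ forces $V_q(x)\le\omega^{\ast\ast}$ for a fixed constant $\omega^{\ast\ast}$ built from the $\overline\alpha_q$, $\underline\alpha_p$ and $\Delta:=\max_{p,q}\|x_p^\ast-x_q^\ast\|$ (this bounds the jump at a switch near an equilibrium, where \eqref{eq:Vratio_bound_continuous} gives nothing). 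Then set $\overline N_{\rm a}:=(\ln\mu)/\lambda$, slightly enlarged so that $\lambda^\ast:=\lambda-(\ln\mu)/N_{\rm a}>0$ for all $N_{\rm a}\ge\overline N_{\rm a}$, and define $\bar\omega(r):=c+\tilde\alpha(r)$ with $\tilde\alpha\in\mathcal K_\infty$ dominating $r\mapsto\kappa\,\alpha_{\rm d}(r)/\lambda$ and $c>0$ large enough to absorb $\omega^{\ast\ast}$ and the at-most-$\mu^{N_0}$ compounding of jumps allowed by \eqref{eq:avg-dwell-time-def-cont}; the explicit formulas are those of Section~\ref{sec:discrete_switched_system}.

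Next I would derive the core estimate. On each inter-switch interval, the comparison lemma on \eqref{eq:V-ISS-p-2-cont} gives $W(t)\le e^{-\lambda(t-t_n)}W(t_n^+)+\alpha_{\rm d}(\|d\|_\infty)/\lambda$; at a switch $t_n$, either $W(t_n^-)>\omega^\ast$ so $W(t_n^+)\le\mu W(t_n^-)$, or $W(t_n^-)\le\omega^\ast$ so $W(t_n^+)\le\omega^{\ast\ast}$. Chaining these over $[\underline t,t)$ and using \eqref{eq:avg-dwell-time-def-cont} as $\mu^{N_\sigma(t,\underline t)}\le\mu^{N_0}e^{(\ln\mu/N_{\rm a})(t-\underline t)}$ yields, as long as $W$ stays above $\omega^\ast$, a bound $W(t)\le\mu^{N_0}e^{-\lambda^\ast(t-\underline t)}W(\underline t)+\kappa\,\alpha_{\rm d}(\|d\|_\infty)$. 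Since $\lambda^\ast>0$, $W$ falls below $\bar\omega(\|d\|_\infty)$ in finite time $T=T(x(0))$; for $t<T$ the same chained bound read through \eqref{eq:V-ISS-p-1-cont}, with $\underline\alpha_{\sigma(t)}(\|x(t)-x_{\sigma(t)}^\ast\|)\le W(t)$ and $W(0)\le\overline\alpha_{\sigma(0)}(\|x(0)-x_{\sigma(0)}^\ast\|)$, together with the $\mathcal K_\infty$ inequality $\gamma^{-1}(a+b)\le\gamma^{-1}(2a)+\gamma^{-1}(2b)$, collapses to \eqref{eq:transient_bound_cont} for one $\beta\in\mathcal{KL}$, $\alpha\in\mathcal K_\infty$. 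For $t\ge T$ I would prove \eqref{eq:ultimate_bound_cont} by showing $\min_pV_p(x(t))\le\bar\omega(\|d\|_\infty)$ is maintained: each flow is non-increasing in the active $V_\sigma$ once it exceeds the floor; at a switch the state does not move, so membership in $\mathcal M(\bar\omega)$ is preserved at that instant; and any burst of short dwells that could temporarily inflate the active Lyapunov value displaces the state only negligibly (all $f_p$ being bounded on the compact region now reached), so some $V_p$ stays below $\bar\omega(\|d\|_\infty)$ throughout.

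The step I expect to be the main obstacle is claim (ii). Unlike the common-equilibrium case, $\mathcal M(\bar\omega)$ is \emph{not} forward invariant---a switch performed while the state sits near one equilibrium makes the newly active $V_\sigma$ jump to a possibly large value---so one cannot simply invoke invariance of a sublevel set. The crux is to choose $c$ and $\tilde\alpha$ (using $\mu$, $\omega^\ast$, $\omega^{\ast\ast}$ and the dwell-time data) large enough that every such excursion, and every $\mu$-fold compounding of jumps permitted by \eqref{eq:avg-dwell-time-def-cont}, still lies inside $\mathcal M(\bar\omega)$, and then to verify that once $W$ has first dropped below that level the solution never leaves $\mathcal M(\bar\omega)$ again. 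Turning these estimates into explicit checkable expressions for $\overline N_{\rm a}$, $c$ and $\tilde\alpha$---done in Section~\ref{sec:discrete_switched_system}---is where the real work lies; the remainder is the verbatim continuous-time analogue of the proof of Theorem~\ref{thm:ISS-switch-discrete}.
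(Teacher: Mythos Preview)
Your overall plan---track $W(t)=V_{\sigma(t)}(x(t))$, get exponential decay between switches from the comparison lemma, bound the jump at each switch by $\mu$ when the state is far from the active equilibrium and by a fixed constant $\omega^{\ast\ast}$ when it is close, then absorb the accumulated $\mu$'s via the average-dwell-time count---is exactly the paper's route, and your treatment of part~(i) matches Lemma~\ref{lem:V-sigma-cont} and its use in the proof.

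The gap is in your argument for part~(ii). You propose to maintain $\min_{p}V_p(x(t))\le\bar\omega$ by noting that the state does not move at a switch and that ``any burst of short dwells \ldots\ displaces the state only negligibly.'' This continuity argument does not yield a uniform bound: between switches the active flow decreases $V_\sigma$, not $\min_pV_p$, so if the state sits in $\mathcal M_q(\bar\omega)$ with $q\neq\sigma$ and $V_\sigma(x)>\bar\omega$, nothing you wrote prevents the trajectory from leaving $\mathcal M_q(\bar\omega)$ long before $V_\sigma$ drops below $\bar\omega$. Boundedness of $f_p$ on a compact region does not save this, because the inter-switch interval need not be short. Indeed, the paper's Example~2 (Remark~\ref{rem:invariance}) shows that $\mathcal M(c)$ is \emph{not} forward invariant even for zero disturbance, so no invariance-style argument can work here.

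The fix---and you already have the ingredient, namely your $\omega^{\ast\ast}$ bound---is a \emph{restart} mechanism that keeps tracking $W=V_\sigma$ rather than switching to $\min_pV_p$. The chained estimate holds on $[\underline t,t_N)$, where $t_N$ is the first switch at which the $\mu$-bound fails, i.e.\ $x(t_N)\in\accentset{\circ}{\mathcal M}_{\sigma(t_N^-)}(\kappa)$. The paper's key observation (Remark~\ref{rem:kappa-intersect}) is that the set construction guarantees $\mathcal M(\kappa)\subseteq\bigcap_p\mathcal M_p(\omega)$, so at $t_N$ one automatically has $V_{\sigma(t_N)}(x(t_N))\le\omega$. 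This places you back in ``Case~(a)'' ($W\le\omega$ at the new initial time), and the chained estimate restarted from $t_N$ gives $W(t)\le\mu^{1+N_0}\omega+\mu^{1+N_0}\hat\alpha(\|d\|_\infty)/\delta=\bar\omega$ until the next breakdown, whence the same reset occurs again. Induction then yields the stronger statement $V_{\sigma(t)}(x(t))\le\bar\omega$ for all $t\ge T$, which immediately implies $x(t)\in\mathcal M_{\sigma(t)}(\bar\omega)\subset\mathcal M(\bar\omega)$. Replace your continuity paragraph with this restart argument and the proof goes through.
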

\noindent A proof of Theorem~\ref{thm:ISS-switch-continuous} is presented in Section~\ref{subsec:cont_proof} below, and explicit expressions for the bound $\overline{N}_{\rm a}$ in \eqref{eq:thm2_dwell_bound} and for $\bar{\omega}$ in \eqref{eq:ultimate_bound_cont} are provided in Section~\ref{subsec:bounds-continuous}. It is only mentioned here that Theorem~\ref{thm:ISS-switch-continuous} is completely analogous to Theorem~\ref{thm:ISS-switch-discrete}, establishing uniform boundedness by \eqref{eq:transient_bound_cont} and uniform ultimate boundedness in the compact set $\mathcal{M}(\bar{\omega})$ characterized by \eqref{eq:ultimate_bound_cont} of the solutions of \eqref{eq:switched-system-cont}. Theorem~\ref{thm:ISS-switch-continuous} does not establish ISS stability of \eqref{eq:switched-system-cont} with respect to the compact set $\mathcal{M}(c)$, since this set is not invariant under the zero-input dynamics of \eqref{eq:switched-system-cont}; see Example 2 in Section~\ref{subsec:simulation} for an illustration of this behavior.
Finally, the following corollaries are the counterparts of Corollaries~\ref{cor:single-equ-disc} and~\ref{cor:trap-set} for the continuous switched system~\eqref{eq:switched-system-cont}. Note that Corollary~\ref{cor:single-equ-cont} particularizes Theorem~\ref{thm:ISS-switch-continuous} to \cite[Theorem~3.1]{vu2007input} for switched systems with a common equilibrium point.  

\begin{corollary}\label{cor:single-equ-cont}
Consider \eqref{eq:switched-system-cont} with $x_p^*=0$ for all $p\in\mathcal{P}$. Let the assumptions of Theorem~\ref{thm:ISS-switch-continuous} hold, and further assume that
\begin{equation}\label{eq:Vratio_lb_cont}
\limsup_{\|x\| \to 0} \frac{V_q(x)}{V_p(x)} < \infty
\end{equation}
for all $p,q\in\mathcal{P}$. Then, the system \eqref{eq:switched-system-cont} is ISS.
\end{corollary}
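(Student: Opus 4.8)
The plan is not to invoke the conclusion of Theorem~\ref{thm:ISS-switch-continuous} as a black box, but to rerun the average-dwell-time estimate while exploiting the fact that, once all equilibria coincide at the origin, the two ratio conditions \eqref{eq:Vratio_bound_continuous} and \eqref{eq:Vratio_lb_cont} combine into a \emph{global} comparison between the Lyapunov functions. The first step is therefore to produce a constant $\mu \ge 1$ with $V_q(x) \le \mu\,V_p(x)$ for every $x \in \mathbb{R}^n$ and every $p,q \in \mathcal{P}$. On $\{\|x\| \ge R\}$ the ratio $V_q/V_p$ is bounded because the $\limsup$ in \eqref{eq:Vratio_bound_continuous} is finite; on $\{0 < \|x\| \le r\}$ it is bounded because the $\limsup$ in \eqref{eq:Vratio_lb_cont} is finite; and on the compact annulus $\{r \le \|x\| \le R\}$ it is bounded because it is continuous there, the lower bound in \eqref{eq:V-ISS-p-1-cont} guaranteeing $V_p(x) \ge \underline\alpha_p(\|x\|) > 0$ so the denominator never vanishes. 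Since $V_p(0) = 0 = V_q(0)$, taking $\mu$ to be the maximum of $1$ and of these finitely many bounds over all pairs $(p,q)$ gives the claim. I expect this to be the crux of the argument: in the genuinely multiple-equilibria setting $V_q/V_p$ blows up near $x_p^*$ (where $V_p \to 0$ but $V_q \not\to 0$), which is precisely why Theorem~\ref{thm:ISS-switch-continuous} produces the non-collapsing set $\mathcal{M}(c)$, and hypothesis \eqref{eq:Vratio_lb_cont} is exactly what removes this obstruction once $x_p^* = 0$ for all $p$.

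With $\mu$ in hand, set $\lambda := \min_{p \in \mathcal{P}} \lambda_p > 0$ and $\alpha(\cdot) := \max_{p \in \mathcal{P}} \alpha_p(\cdot) \in \mathcal{K}_\infty$ (a finite index set), and let $W(t) := V_{\sigma(t)}(x(t))$ along a solution $x(t)$ of \eqref{eq:switched-system-cont}. Between two consecutive switching times the active mode $p$ and \eqref{eq:V-ISS-p-2-cont} give $\dot W \le -\lambda_p V_p(x) + \alpha_p(\|d\|_\infty) \le -\lambda W + \alpha(\|d\|_\infty)$, while at each switching time the comparison inequality gives $W(t^+) \le \mu\, W(t^-)$. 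A standard average-dwell-time computation --- identical in form to the one in the proof of Theorem~\ref{thm:ISS-switch-continuous} and to \cite{vu2007input}, in which the input injected at time $s$ accumulates only the jump factors of the switches occurring in $(s,t]$, whose number is controlled, via \eqref{eq:avg-dwell-time-def-cont}, by a constant plus $(t-s)/N_{\rm a}$ --- then yields
\[
W(t) \;\le\; \mu^{N_0} e^{-\lambda' t}\, W(0) \;+\; \frac{\mu^{N_0+1}}{\lambda'}\,\alpha(\|d\|_\infty),
\qquad
\lambda' := \lambda - \frac{\ln\mu}{N_{\rm a}},
\]
valid for every switching signal satisfying \eqref{eq:avg-dwell-time-def-cont}. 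Choosing $\overline{N}_{\rm a} := 2\ln\mu/\lambda$ (any positive number if $\mu = 1$) ensures that $N_{\rm a} \ge \overline{N}_{\rm a}$ implies $\lambda' \ge \lambda/2 > 0$, so the right-hand side above is an exponentially decaying term in $t$ plus a constant that is a class $\mathcal{K}_\infty$ function of $\|d\|_\infty$, both uniform over the admissible switching signals.

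It remains to return to the state. With $\underline\alpha(\cdot) := \min_p \underline\alpha_p(\cdot)$ and $\overline\alpha(\cdot) := \max_p \overline\alpha_p(\cdot)$, both of class $\mathcal{K}_\infty$, \eqref{eq:V-ISS-p-1-cont} gives $\underline\alpha(\|x(t)\|) \le W(t)$ and $W(0) \le \overline\alpha(\|x(0)\|)$; combining this with the displayed bound and the elementary inequality $\rho(a+b) \le \rho(2a) + \rho(2b)$ for $\rho \in \mathcal{K}_\infty$ yields
\[
\|x(t)\| \;\le\; \underline\alpha^{-1}\!\big(2\mu^{N_0}e^{-\lambda' t}\,\overline\alpha(\|x(0)\|)\big) \;+\; \underline\alpha^{-1}\!\Big(\tfrac{2\mu^{N_0+1}}{\lambda'}\,\alpha(\|d\|_\infty)\Big).
\]
The first term is a class $\mathcal{KL}$ function of $(\|x(0)\|, t)$ and the second a class $\mathcal{K}_\infty$ function of $\|d\|_\infty$, and neither depends on the particular switching signal (only on the class parameters $N_0$ and $\overline{N}_{\rm a}$); this is exactly the ISS estimate \eqref{eq:ISS_def_cont} for \eqref{eq:switched-system-cont} relative to the origin, completing the proof. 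Equivalently, one may revisit the proof of Theorem~\ref{thm:ISS-switch-continuous} and verify that, when the equilibria coincide and \eqref{eq:Vratio_lb_cont} holds, the constant $c$ in \eqref{eq:omegabar-continuous} can be taken arbitrarily small, so that $\mathcal{M}(c)$ collapses to the origin and the same conclusion follows.
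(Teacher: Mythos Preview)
Your proposal is correct and follows essentially the same approach as the paper: the key observation in both is that the extra hypothesis \eqref{eq:Vratio_lb_cont}, together with \eqref{eq:Vratio_bound_continuous} and continuity on the intervening annulus, yields a \emph{global} constant $\mu$ with $V_q(x)\le\mu V_p(x)$ for all $x\in\mathbb{R}^n$, after which the average-dwell-time estimate runs without ever having to stop at a set $\accentset{\circ}{\mathcal{M}}_{\sigma(t_n^-)}(\kappa)$. The paper simply phrases this as ``$t_N\to\infty$ in Lemma~\ref{lem:V-sigma-cont}, so \eqref{eq:beta-alpha-bound-cont} holds for all $t\ge 0$,'' whereas you have unfolded that lemma's computation explicitly; the content is the same.
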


\begin{corollary}\label{cor:trap-set-cont}
Under the assumptions of Theorem~\ref{thm:ISS-switch-continuous}, there exists a non-empty compact set $\mathcal{S} \subset \mathcal{M}$ such that $x(0) \in \mathcal{S}$, $d\in\mathcal{D}$ imply $x(t) \in \mathcal{M}(\bar{\omega})$ for all $t\geq 0$.
\end{corollary}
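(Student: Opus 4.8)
The plan is to construct $\mathcal{S}$ explicitly and to check that solutions initialized in it never leave $\mathcal{M}(\bar{\omega})$, using only the \emph{statement} of Theorem~\ref{thm:ISS-switch-continuous}---chiefly the transient estimate \eqref{eq:transient_bound_cont}---rather than reopening its proof; this mirrors the argument behind Corollary~\ref{cor:trap-set}. Two elementary facts will be used: (a) evaluating \eqref{eq:transient_bound_cont} at $t=0$ with $d=0$ forces $\beta(s,0)\ge s$ for all $s\ge 0$; and (b) for any $\gamma\in\mathcal{K}_\infty$ one has $\gamma(u+v)\le\gamma(2u)+\gamma(2v)$ for $u,v\ge 0$. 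First I would let $r>0$ be the radius of the smallest closed Euclidean ball enclosing the finite equilibrium set $\{x_p^*\}_{p\in\mathcal{P}}$: the map $y\mapsto\max_{p\in\mathcal{P}}\|y-x_p^*\|$ is continuous and radially unbounded, so it attains its minimum $r$ at some $y_c$, and I set
\[
\mathcal{S}:=\bigcap_{p\in\mathcal{P}}\{x\in\mathbb{R}^n:\|x-x_p^*\|\le r\}.
\]
This set is non-empty (it contains $y_c$) and, as a finite intersection of closed balls, compact.

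Next I would carry out the core estimate. Fix $x(0)\in\mathcal{S}$, $d\in\mathcal{D}$, and any switching signal $\sigma$ admissible for Theorem~\ref{thm:ISS-switch-continuous} (satisfying \eqref{eq:avg-dwell-time-def-cont} and \eqref{eq:thm2_dwell_bound}); the theorem produces $T\in\mathbb{R}_+$ with $x(t)\in\mathcal{M}(\bar{\omega})$ for all $t\ge T$. For $0\le t<T$, since $\|x(0)-x_{\sigma(0)}^*\|\le r$ and $\beta(s,\cdot)$ is nonincreasing, \eqref{eq:transient_bound_cont} gives $\|x(t)-x_{\sigma(t)}^*\|\le\beta(r,0)+\alpha(\|d\|_\infty)$, so by \eqref{eq:V-ISS-p-1-cont} and fact (b),
\begin{align*}
V_{\sigma(t)}(x(t)) &\le \overline{\alpha}_{\sigma(t)}\big(\beta(r,0)+\alpha(\|d\|_\infty)\big) \\
&\le \max_{p\in\mathcal{P}}\overline{\alpha}_p\big(2\beta(r,0)\big)+\max_{p\in\mathcal{P}}\overline{\alpha}_p\big(2\alpha(\|d\|_\infty)\big).
\end{align*}
Hence, as long as the constant $c$ and the function $\tilde{\alpha}$ in \eqref{eq:omegabar-continuous} satisfy $c\ge\max_{p}\overline{\alpha}_p(2\beta(r,0))$ and $\tilde{\alpha}(\cdot)\ge\max_{p}\overline{\alpha}_p(2\alpha(\cdot))$, one gets $V_{\sigma(t)}(x(t))\le c+\tilde{\alpha}(\|d\|_\infty)=\bar{\omega}(\|d\|_\infty)$, i.e.\ $x(t)\in\mathcal{M}(\bar{\omega})$, on $[0,T)$ as well; together with part~(ii) of the theorem this yields $x(t)\in\mathcal{M}(\bar{\omega})$ for every $t\ge 0$. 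Finally, $\mathcal{S}\subset\mathcal{M}(c)\subseteq\mathcal{M}(\bar{\omega})$, because for $x\in\mathcal{S}$ and any fixed $q\in\mathcal{P}$ fact (a) gives $\|x-x_q^*\|\le r\le 2\beta(r,0)$, hence $V_q(x)\le\overline{\alpha}_q(2\beta(r,0))\le c$ and $x$ lies in the $q$-sublevel set that forms part of $\mathcal{M}(c)$.

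The step I expect to be the genuine obstacle is justifying the two inequalities $c\ge\max_{p}\overline{\alpha}_p(2\beta(r,0))$ and $\tilde{\alpha}\ge\max_{p}\overline{\alpha}_p\circ(2\alpha)$, which relate the ultimate-bound data $(c,\tilde{\alpha})$ to the transient-bound data $(\beta,\alpha)$. I would dispose of it by noting that Theorem~\ref{thm:ISS-switch-continuous} asserts only the \emph{existence} of some admissible $c>0$ and $\tilde{\alpha}\in\mathcal{K}_\infty$, and that enlarging them---to $\max\{c,\max_{p}\overline{\alpha}_p(2\beta(r,0))\}$ and $\max\{\tilde{\alpha},\max_{p}\overline{\alpha}_p\circ(2\alpha)\}$---leaves both conclusions of the theorem intact (the transient estimate is unchanged and $\mathcal{M}(\bar{\omega})$ merely enlarges), while $r$, $\beta$, $\alpha$ are already fixed; thus the enlargement may be assumed from the start. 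Alternatively, one may check the two inequalities directly against the explicit formulas of Sections~\ref{subsec:bounds-continuous} and~\ref{subsec:cont_proof}, which is routine. If one additionally wants $\mathcal{S}$ to contain the equilibria---natural in the motion-planning setting motivating the paper---it suffices to take for $\mathcal{S}$ a union of small closed balls centred at the $x_p^*$ and to enlarge $c$ so as to absorb excursions of size $\max_{p,q}\|x_p^*-x_q^*\|$ rather than $r$; the remaining bookkeeping (compactness, non-emptiness, $\mathcal{S}\subset\mathcal{M}(c)$) is as in Corollary~\ref{cor:trap-set}.
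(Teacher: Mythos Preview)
Your approach is genuinely different from the paper's. The paper does \emph{not} argue from the statement of Theorem~\ref{thm:ISS-switch-continuous}; instead it reopens the proof and takes $\mathcal{S}:=\bigcap_{p\in\mathcal{P}}\mathcal{M}_p(\omega)$, observing that for any $x(0)\in\mathcal{S}$ one has $V_{\sigma(0)}(x(0))\le\omega$, so Case~(a) of the proof applies verbatim and yields $T=0$; non-emptiness comes from Remark~\ref{rem:kappa-intersect}. This is a one-line argument, and---crucially---it delivers the corollary for the \emph{explicit} $\bar{\omega}$ of \eqref{eq:omega-bar-cont}, which is the point of Section~\ref{subsec:bounds-continuous}.

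Your route, by contrast, needs to inflate $c$ and $\tilde{\alpha}$ so that $c\ge\max_p\overline{\alpha}_p(2\beta(r,0))$ and $\tilde{\alpha}\ge\max_p\overline{\alpha}_p\circ(2\alpha)$. That is logically permissible at the level of an existence statement, but it proves the corollary only for a \emph{larger} $\mathcal{M}(\bar{\omega})$ than the one the paper constructs, which weakens the result and runs against the paper's emphasis on explicit, computable bounds. Your claim that one can ``alternatively check the two inequalities directly against the explicit formulas \ldots which is routine'' is not substantiated: with $c=\mu^{1+N_0}\omega(\kappa)$ and $\beta(s,0)=\max_{p,q}\underline{\alpha}_p^{-1}\!\big(2\mu^{1+N_0}\overline{\alpha}_q(s)\big)$, the quantity $\max_p\overline{\alpha}_p(2\beta(r,0))$ involves a composition $\overline{\alpha}\circ\underline{\alpha}^{-1}\circ\overline{\alpha}$ and the purely geometric radius $r$, and there is no reason it should be dominated by $\mu^{1+N_0}\omega(\kappa)$ in general. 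A smaller issue: your ``fact (a)'' ($\beta(s,0)\ge s$) does not follow from the \emph{statement} of Theorem~\ref{thm:ISS-switch-continuous} alone, since \eqref{eq:transient_bound_cont} is asserted only for $0\le t<T$ and $T$ can be $0$; it does hold for the explicit $\beta$ of Lemma~\ref{lem:V-sigma-cont}, but then you are already looking inside the proof, at which point the paper's direct argument is simpler.
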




\section{Set Constructions and Explicit Bounds}
\label{sec:discrete_switched_system}

This section characterizes the family of switching signals required by Theorems~\ref{thm:ISS-switch-discrete} and~\ref{thm:ISS-switch-continuous} by providing explicit expressions for the dwell-time bound $\overline{N}_{\rm a}$ in \eqref{eq:thm1_dwell_bound} and \eqref{eq:thm2_dwell_bound}, respectively. Explicit expressions of $\bar{\omega}$ are also given, thereby determining the sets $\mathcal{M}(\bar{\omega})$ within which the solutions ultimately converge. We begin with relevant set constructions motivated by \cite{basar2010}. 

\subsection{Set Constructions}
\label{subsec:set}

Suppose that $V_p$ is a function satisfying \eqref{eq:V-ISS-1}-\eqref{eq:V-ISS-2-p} in the case of discrete or \eqref{eq:V-ISS-p-1-cont}-\eqref{eq:V-ISS-p-2-cont} in the case of continuous switched systems. The $\kappa$-sublevel set of $V_p$ is defined as 
\[
\mathcal{M}_p(\kappa):=\{ x\in\mathbb{R}^n~|~V_p(x)\leq \kappa \}\enspace,
\]
and the union of the sublevel sets over $\mathcal{P}$ is denoted as
\begin{equation}\label{eq:M-union}
\mathcal{M}(\kappa) := \bigcup_{p\in\mathcal{P}} \mathcal{M}_p(\kappa) \enspace. 
\end{equation}
Next, we define a positive constant
\begin{equation}\label{eq:omega}
\omega(\kappa):=\max_{p\in\mathcal{P}} \max_{x\in \mathcal{M}(\kappa)} V_p(x)\enspace,
\end{equation}
which is well defined since $\mathcal{M}(\kappa)$ is compact for any $\kappa>0$ and $\mathcal{P}$ is finite. Intuitively, the definition of $\omega(\kappa)$ by \eqref{eq:omega} enlarges each sublevel set $\mathcal{M}_p(\kappa)$ so that the resulting enlarged set $\mathcal{M}_p(\omega(\kappa))$ includes the sets $\mathcal{M}_q(\kappa)$ for all $q \in \mathcal{P}$. An illustration of this construction can be seen in Fig.~\ref{fig:set-construction} and the following remark makes this intuition precise.
\begin{rem}\label{rem:kappa-intersect}
By the definition \eqref{eq:omega} of $\omega(\kappa)$, $V_p(x)\leq \omega(\kappa)$ for any $x\in\mathcal{M}(\kappa)$ and any $p\in\mathcal{P}$. Thus, $\mathcal{M}(\kappa)\subseteq\mathcal{M}_p(\omega(\kappa))$ for all $p\in\mathcal{P}$, implying that $\mathcal{M}(\kappa)\subseteq \bigcap_{p\in\mathcal{P}}\mathcal{M}_p(\omega(\kappa))$.
\end{rem}
 
\begin{figure}[t]
\vspace{+0.02in}
\begin{centering}
\includegraphics[width=0.7\columnwidth]{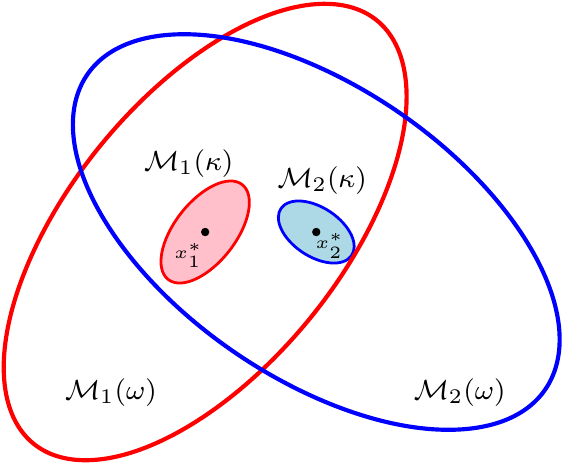} 
\par\end{centering}
\vspace{-0.1in}
\caption{Illustration of the set construction. The sublevel sets for system 1 are in red and the sublevel sets of system 2 are in blue; the construction leads to $\mathcal{M}_1(\kappa)\cup\mathcal{M}_2(\kappa) \subseteq \mathcal{M}_1(\omega)\cap\mathcal{M}_2(\omega)$ as Remark~\ref{rem:kappa-intersect} explains.}
\vspace{-0.15in}
\label{fig:set-construction} 
\end{figure}

We now establish a relationship between the values $V_p(x)$ and $V_q(x)$ of any pair $V_p, V_q$ of ISS-Lyapunov functions at a given point $x\in\mathbb{R}^n$ as the system switches between the corresponding two subsystems $p, q \in \mathcal{P}$, $p \neq q$. Consider the ratio $V_q(x) / V_p(x)$, and let
\begin{equation}\label{eq:mu-p-def}
\mu_p(\kappa):=\max_{q\in\mathcal{P}} \sup_{x \notin \accentset{\circ}{\mathcal{M}}_p(\kappa)} \frac{V_q(x)}{V_p(x)} \enspace, 
\end{equation}
%
which is bounded as shown in the following arguments. Let $a_1:=\limsup_{\|x-x_p^*\| \to \infty} V_q(x)/V_p(x)$ which is bounded by \eqref{eq:Vratio_bound_discrete} and \eqref{eq:Vratio_bound_continuous}. As a consequence of \cite[Theorem~3.17]{rudin1964principles} it follows that there exists a $r>0$ such that for any $x$ with $\|x-x_p^*\|>r$, we have $V_q(x)/V_p(x) \leq a_1 + 1$. Expand $r$ if necessary to ensure that $\mathcal{M}(\kappa) \subset B_r(x_p^*)$. Note that $V_q(x)/V_p(x)$ is continuous on $\mathbb{R}^n\setminus\{x_p^*\}$ hence it is also continuous on $\overline{B}_r(x_p^*)\setminus\accentset{\circ}{\mathcal{M}}_p(\kappa)\subset \mathbb{R}^n\setminus\{x_p^*\}$ which is compact. Then there exists $a_2>0$ such that $V_q(x)/V_p(x) < a_2$ for any $x\in \overline{B}_r(x_p^*)\setminus\accentset{\circ}{\mathcal{M}}_p(\kappa)$. Therefore $V_q(x)/V_p(x)<\max\{a_1+1,a_2\}$ for all $x\not\in\accentset{\circ}{\mathcal{M}}_p(\kappa)$ ensuring the boundedness of $\mu_p(\kappa)$ in \eqref{eq:mu-p-def}. 

This constant provides a bound on how much the value of the Lyapunov function can change on switching. Clearly, 
\begin{equation} \nonumber
\forall p, q \in \mathcal{P}, \enspace \enspace V_q(x) \leq \mu_p(\kappa) V_p(x) \enspace  \enspace \forall x \notin \accentset{\circ}{\mathcal{M}}_p(\kappa) \enspace.         
\end{equation}
To make this bound independent of $p$, let
\begin{equation} \label{eq:mu-def}
\mu(\kappa):=\max_{p\in\mathcal{P}}\mu_p(\kappa) \enspace,
\end{equation}
which implies that
\begin{equation} \label{eq:mu-bound-V}
\forall p, q \in \mathcal{P}, \enspace \enspace V_q(x) \leq \mu(\kappa) V_p(x) \enspace\enspace \forall x \notin \accentset{\circ}{\mathcal{M}}_p(\kappa) \enspace.
\end{equation}
Due to the interchangeability of the indices $p$ and $q$, it also holds that  $V_p(x) \leq \mu(\kappa) V_q(x)$ as long as $x \notin \accentset{\circ}{\mathcal{M}}_q(\kappa)$. Hence, when $x \notin \accentset{\circ}{\mathcal{M}}_p(\kappa) \cup \accentset{\circ}{\mathcal{M}}_q(\kappa)$, we can write $V_q(x) \leq \mu(\kappa)^2 V_q(x)$, from which it follows that 
\begin{equation}\label{eq:mu_1}
\mu(\kappa) \geq 1 \enspace,
\end{equation}
since $V_q$ is positive definite for $x \notin \accentset{\circ}{\mathcal{M}}_p(\kappa) \cup \accentset{\circ}{\mathcal{M}}_q(\kappa)$. 
Finally, in the context of the switched system \eqref{eq:switched-system}, it is worth noting that \eqref{eq:mu-bound-V} holds for a switching instant $k_n$ even if $x_{k_n} \in\accentset{\circ}{\mathcal{M}}(\kappa)$, as long as $x_{k_n} \notin \accentset{\circ}{\mathcal{M}}_{\sigma(k_n - 1)}(\kappa)$. A similar statement can be made for the  continuous switched system \eqref{eq:switched-system-cont}.

The aforementioned set constructions allow us to provide explicit expressions for the average dwell-time bound $\overline{N}_{\rm a}$ in \eqref{eq:thm1_dwell_bound} and \eqref{eq:thm2_dwell_bound}, and for the characterization of the sets $\mathcal{M}(\bar{\omega})$ in \eqref{eq:ultimate_bound_discr} and \eqref{eq:ultimate_bound_cont} of Theorems~\ref{thm:ISS-switch-discrete} and~\ref{thm:ISS-switch-continuous}, respectively. Although these expressions are derived in the proofs of Section~\ref{sec:proofs} below, we provide them  here to ease their use in applications.

\subsection{Switched Discrete Systems: Explicit Bounds}
\label{subsec:bounds-discrete}

For the sake of notational convenience in \eqref{eq:V-ISS-2-p}, let $\lambda:=\max_{p\in\mathcal{P}}\lambda_p$ and $\hat{\alpha}(\|d\|_\infty):=\max_{p\in\mathcal{P}} \{\alpha_p(\|d\|_\infty)\}$. Then, for any $p\in\mathcal{P}$ we can write \eqref{eq:V-ISS-2-p} as
\begin{align}\label{eq:V-ISS-2}
V_p(x_{k+1}) \leq \lambda V_p(x_k) + \hat{\alpha}(\|d\|_\infty) \enspace,
\end{align}
where $\lambda\in(0,1)$ and $\hat{\alpha}$ are independent of $p$.

Let $\delta$ be any constant that lies within $(\lambda,1)$, then it will be shown in the proof of Theorem~\ref{thm:ISS-switch-discrete} that the lower bound on the average dwell-time, i.e. $\overline{N}_{\rm a}$ in  \eqref{eq:thm1_dwell_bound}, is
\begin{equation}\label{eq:avg-dwell-lb-disc}
\overline{N}_{\rm a} = \frac{\ln\mu(\kappa)}{\ln (\delta/\lambda)} \enspace,
\end{equation}
where the constant $\mu(\kappa) \geq 1$ is defined by \eqref{eq:mu-def}. Furthermore, the compact set $\mathcal{M}(\bar{\omega})$ in Theorem~\ref{thm:ISS-switch-discrete}\emph{(ii)} is characterized by
\begin{align} \label{eq:omega-bar-disc}
\bar{\omega}:=\mu(\kappa)^{N_0}\omega(\kappa) + \frac{\mu(\kappa)^{N_0}}{1-\delta} \hat{\alpha}(\|d\|_\infty)   \enspace,
\end{align}
from which the constant $c>0$ and the class-$\mathcal{K}_\infty$ function $\tilde{\alpha}$ participating in \eqref{eq:omegabar-discrete} can be readily identified. 

It is remarked that the constants $\delta$ and $\kappa$ are design parameters available for tuning the frequency of the switching signal. The choice of $\delta\in(\lambda,1)$ provides a tradeoff between robustness of the switched system \eqref{eq:switched-system} and the switching frequency. In more detail, if $\delta$ is chosen close to $\lambda$, the lower bound on the average dwell-time \eqref{eq:avg-dwell-lb-disc} becomes large, thus limiting the number of  switches $N_\sigma(k,\underline{k})$ in any time interval $[\underline{k},k)$, as \eqref{eq:avg-dwell-time-def} implies. On the other hand, if $\delta$ is chosen close to $1$, the size of the compact set $\mathcal{M}({\bar{\omega}})$ to which the solutions ultimately converge increases as $\bar{\omega}$ in \eqref{eq:omega-bar-disc} increases. As a result, the solutions of \eqref{eq:switched-system} are permitted to wander in a larger set, indicating low robustness of \eqref{eq:switched-system} to disturbances. Hence, slower switching signals result in tighter trapping regions.

The effect of $\kappa$ is more involved. Picking smaller values of $\kappa$ results in inner sublevel level sets of $V_p$, thus reducing the size of $\mathcal{M}(\kappa)$ in \eqref{eq:M-union} and causing $\mu_p(\kappa)$ to increase as the supremum in \eqref{eq:mu-p-def} increases. Hence, smaller values of $\kappa$ result in larger values of $\mu(\kappa)$ by \eqref{eq:mu-def}, leading to slower switching frequencies as well as larger compact attractive sets $\mathcal{M}({\bar{\omega}})$; observe the role of $\mu(\kappa)$ in \eqref{eq:avg-dwell-lb-disc} and  \eqref{eq:omega-bar-disc}. On the other hand, picking a larger $\kappa$ will result in smaller $\mu(\kappa)$ allowing for faster switches, however, $\omega(\kappa)$ also increases making its effect on $\mathcal{\bar{\omega}}$ unclear; observe the $\mu^{N_0}\omega(\kappa)$ term in \eqref{eq:omega-bar-disc}.

\subsection{Switched Continuous Systems: Explicit Bounds}
\label{subsec:bounds-continuous}

To simplify notation in \eqref{eq:V-ISS-p-2-cont}, we define $\lambda:=\min_{p\in\mathcal{P}} \lambda_p>0$ and $\hat{\alpha}(\|d\|_\infty):=\max_{p\in\mathcal{P}} \{ \alpha_p(\|d\|_\infty) \}$. Then, for any $p\in\mathcal{P}$, 
\begin{align}\label{eq:V-ISS-cont-p-ind}
\frac{\partial V_p}{\partial x} f_p(x,d) \leq -\lambda V_p(x) + \hat{\alpha}(\|d\|_\infty) \enspace,
\end{align}
where $\lambda>0$ and $\hat{\alpha}(\|d\|_\infty)$ are independent of $p$. 

Let $\delta$ be any constant in the open interval $(0,\lambda)$, then the lower bound on the average dwell time $\overline{N}_{\rm a}$ in Theorem~\ref{thm:ISS-switch-continuous} is
\begin{equation}\label{eq:avg-dwell-lb-cont}
\overline{N}_{\rm a} = \frac{\ln\mu(\kappa)}{\lambda-\delta} \enspace,
\end{equation}
where $\mu(\kappa) \geq 1$ is defined by \eqref{eq:mu-def}.The compact set $\mathcal{M}(\bar{\omega})$ in Theorem~\ref{thm:ISS-switch-continuous}\emph{(ii)}, within which solutions of \eqref{eq:switched-system-cont} ultimately become trapped corresponds to
\begin{align} \label{eq:omega-bar-cont}
\bar{\omega}:=\mu(\kappa)^{1+N_0}\omega(\kappa) + \mu(\kappa)^{1+N_0}\frac{1}{\delta} \hat{\alpha}(\|d\|_\infty)   \enspace,
\end{align}
from which the constant $c>0$ and the class-$\mathcal{K}_\infty$ function $\tilde{\alpha}$ in \eqref{eq:omegabar-continuous} can be easily recognized.

As in the case of the discrete switched systems, the constant $\delta\in(0,\lambda)$ presents a tradeoff between the robustness of the system and the switching frequency; setting $\delta$ close to $0$ increases the disturbance term in \eqref{eq:omega-bar-cont} while setting $\delta$ close to $\lambda$ increases $\overline{N}_{\rm a}$ in \eqref{eq:avg-dwell-lb-cont}. Regarding the effect of $\kappa$, the discussion is identical to that in Section~\ref{subsec:bounds-discrete}.

\section{Proofs}
\label{sec:proofs}

This section presents proofs of Theorem~\ref{thm:ISS-switch-discrete} and Theorem~\ref{thm:ISS-switch-continuous} together with proofs of their corollaries.

\subsection{Switched Discrete Systems}
\label{subsec:disc-proof}

We begin by establishing an important estimate in the following lemma that will be used in the proof of Theorem~\ref{thm:ISS-switch-discrete}. 
\begin{lemma}\label{lem:V-sigma}
Consider \eqref{eq:switched-system}. Let $\underline{k} \in \mathbb{Z}_+$ be the initial time and $\{k_n\}_{n=1}^\infty$, $k_n\in\mathbb{Z}_+$, be a strictly monotonically increasing sequence of switching instants with $k_1>\underline{k}$. Given $\kappa> 0$, define $\mu(\kappa)$ by \eqref{eq:mu-p-def}-\eqref{eq:mu-def} and let 
\begin{equation}\label{eq:inf-to-kappa}
N:=\inf\{ n\in\mathbb{Z}_+\cup \{\infty\}~|~x_{k_n}\in\accentset{\circ}{\mathcal{M}}_{\sigma(k_{n}-1)}(\kappa)\} \enspace ,
\end{equation}
be the index of the first switching instant $k_N$ for which \eqref{eq:mu-bound-V} cannot be used. Assume that for each $p \in \mathcal{P}$, the function $V_p$ satisfies \eqref{eq:V-ISS-1} and \eqref{eq:V-ISS-2} for some $\lambda \in (0,1)$. Choose $\delta\in(\lambda,1)$ and assume that the switching signal satisfies Definition~\ref{def:average-dwell-time} for any $N_0\geq 1$ and $N_{\rm a}\geq \overline{N}_{\rm a}$ where $\overline{N}_{\rm a}$ is given by \eqref{eq:avg-dwell-lb-disc}. Then, for any $x_{\underline{k}}\in\mathbb{R}^n$, $d\in\mathcal{D}$, and $\underline{k}\leq k <k_N$,
\begin{align}\label{eq:ISS}
V_{\sigma(k)}(x_k) \leq \mu^{N_0} \delta^{k-\underline{k}} V_{\sigma(\underline{k})}(x_{\underline{k}}) + \frac{\mu^{N_0}}{1-\delta} \hat{\alpha}(\|d\|_\infty) \enspace,
\end{align}
where $\hat{\alpha}$ is the class-$\mathcal{K}_\infty$ function in \eqref{eq:V-ISS-2}.
In addition, for $\underline{k}\leq k <k_N$ the solutions of \eqref{eq:switched-system} satisfy
\begin{equation}\label{eq:beta-alpha-bound-disc}
\|x_k-x_{\sigma(k)}^*\| \leq \beta(\|x_{\underline{k}}-x_{\sigma(\underline{k})}^*\|,k - \underline{k}) + \alpha(\|d\|_\infty)
\end{equation}
for some $\beta\in\mathcal{KL}$ and $\alpha\in\mathcal{K}_\infty$. 
\end{lemma}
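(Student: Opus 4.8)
The plan is to establish the bound \eqref{eq:ISS} by a discrete Gr\"onwall-type iteration that alternates between the single-subsystem decay estimate \eqref{eq:V-ISS-2} on each ``dwell'' interval between switches and the ratio bound \eqref{eq:mu-bound-V} at each switching instant, and then to keep careful track of the number of switches using the average dwell-time constraint \eqref{eq:avg-dwell-time-def}.

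First I would fix $k$ with $\underline{k}\le k<k_N$ and list the switching instants that fall in $[\underline{k},k)$, say $k_1<\dots<k_{N_\sigma(k,\underline{k})}$. On each interval $[k_{n},k_{n+1})$ (and on $[\underline{k},k_1)$ and $[k_{N_\sigma},k]$) a single subsystem is active, so iterating \eqref{eq:V-ISS-2} a total of $\ell$ steps gives $V_{\sigma}(x_{j+\ell})\le\lambda^{\ell}V_{\sigma}(x_j)+\hat\alpha(\|d\|_\infty)\sum_{i=0}^{\ell-1}\lambda^i\le\lambda^{\ell}V_{\sigma}(x_j)+\frac{1}{1-\lambda}\hat\alpha(\|d\|_\infty)$. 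At each switching instant $k_n$ with $n<N$, by the very definition \eqref{eq:inf-to-kappa} of $N$ we have $x_{k_n}\notin\accentset{\circ}{\mathcal{M}}_{\sigma(k_n-1)}(\kappa)$, so \eqref{eq:mu-bound-V} applies and $V_{\sigma(k_n)}(x_{k_n})\le\mu(\kappa)V_{\sigma(k_n-1)}(x_{k_n})$. Composing these two facts over all the switches and dwell intervals up to time $k$ and collecting the disturbance contributions yields an estimate of the form
\begin{equation}\nonumber
V_{\sigma(k)}(x_k)\le \mu^{N_\sigma(k,\underline{k})}\lambda^{k-\underline{k}}V_{\sigma(\underline{k})}(x_{\underline{k}})+\frac{\mu^{N_\sigma(k,\underline{k})}}{1-\lambda}\hat\alpha(\|d\|_\infty)\enspace,
\end{equation}
where I would bound each disturbance term that is picked up after $j$ further switches by $\mu^{j}\frac{1}{1-\lambda}\hat\alpha$ and use $\mu\ge1$ together with the geometric-type summation to absorb everything into a single $\mu^{N_\sigma(k,\underline{k})}/(1-\delta)$ factor; this last repackaging (trading a $\sum\lambda^{i}$ against a $\delta^{k-\underline{k}}$) is where the choice $\delta\in(\lambda,1)$ first gets used.

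Next I would invoke the average dwell-time bound: $N_\sigma(k,\underline{k})\le N_0+(k-\underline{k})/N_{\rm a}$, hence $\mu^{N_\sigma(k,\underline{k})}\le\mu^{N_0}\,\mu^{(k-\underline{k})/N_{\rm a}}=\mu^{N_0}\bigl(\mu^{1/N_{\rm a}}\bigr)^{k-\underline{k}}$. Therefore $\mu^{N_\sigma}\lambda^{k-\underline{k}}\le\mu^{N_0}\bigl(\lambda\,\mu^{1/N_{\rm a}}\bigr)^{k-\underline{k}}$, and the requirement $N_{\rm a}\ge\overline{N}_{\rm a}=\ln\mu(\kappa)/\ln(\delta/\lambda)$ is exactly equivalent to $\lambda\,\mu^{1/N_{\rm a}}\le\delta$. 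Substituting $\bigl(\lambda\mu^{1/N_{\rm a}}\bigr)^{k-\underline{k}}\le\delta^{k-\underline{k}}$ into both the transient and the disturbance term — and, for the disturbance term, also noting $\mu^{N_\sigma}\le\mu^{N_0}\delta^{k-\underline{k}}(\lambda^{-1})^{k-\underline{k}}$ is not needed since $\delta<1$ already lets me replace $\mu^{N_\sigma}$ by its value at $k=\underline{k}$ in the worst case after the geometric resummation — gives precisely \eqref{eq:ISS}. For \eqref{eq:beta-alpha-bound-disc} I would then sandwich: from \eqref{eq:V-ISS-1}, $\|x_k-x^*_{\sigma(k)}\|\le\underline{\alpha}_{\sigma(k)}^{-1}(V_{\sigma(k)}(x_k))$ and $V_{\sigma(\underline{k})}(x_{\underline{k}})\le\overline{\alpha}_{\sigma(\underline{k})}(\|x_{\underline{k}}-x^*_{\sigma(\underline{k})}\|)$; composing with the split $\underline{\alpha}^{-1}(a+b)\le\underline{\alpha}^{-1}(2a)+\underline{\alpha}^{-1}(2b)$ and taking $\underline{\alpha}:=\min_p\underline{\alpha}_p$, $\overline{\alpha}:=\max_p\overline{\alpha}_p$ produces the class-$\mathcal{KL}$ function $\beta(s,t):=\underline{\alpha}^{-1}\bigl(2\mu^{N_0}\delta^{t}\,\overline{\alpha}(s)\bigr)$ and the class-$\mathcal{K}_\infty$ function $\alpha(r):=\underline{\alpha}^{-1}\bigl(\tfrac{2\mu^{N_0}}{1-\delta}\hat\alpha(r)\bigr)$, which are well defined because $0<\delta<1$.

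The main obstacle I anticipate is the bookkeeping in the resummation step: the disturbance contributions entering at different switching instants are each amplified by a different power of $\mu$ (those entering later get hit by fewer subsequent switches), so one must argue carefully that the total is dominated by $\frac{\mu^{N_0}}{1-\delta}\hat\alpha(\|d\|_\infty)$ uniformly in $k$ — this is exactly where one needs $\mu\ge1$, $\delta<1$, and the dwell-time-to-$\overline{N}_{\rm a}$ equivalence simultaneously, rather than any one of them alone. A secondary subtlety is that the estimate must hold for \emph{all} $k<k_N$, including $k$ strictly between switches, so the indexing of the ``last'' dwell interval $[k_{N_\sigma(k,\underline{k})},k]$ has to be handled as a (possibly empty) tail rather than a full inter-switch interval; and one must remember that \eqref{eq:mu-bound-V} is legitimately applicable at every $k_n$ with $n<N$ precisely by the minimality in the definition \eqref{eq:inf-to-kappa} of $N$, per the last paragraph of Section~\ref{subsec:set}.
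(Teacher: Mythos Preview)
Your overall strategy---iterate \eqref{eq:V-ISS-2} on dwell intervals, apply \eqref{eq:mu-bound-V} at switches (legitimate for $n<N$ by minimality of $N$), then invoke the average dwell-time bound---matches the paper. The transient term and the final sandwich for \eqref{eq:beta-alpha-bound-disc} are handled correctly and essentially identically to the paper.

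The gap is in the disturbance term, and it is exactly the step you flag as ``the main obstacle'' but do not actually resolve. Your displayed intermediate bound $\frac{\mu^{N_\sigma(k,\underline{k})}}{1-\lambda}\hat\alpha(\|d\|_\infty)$ cannot be converted into $\frac{\mu^{N_0}}{1-\delta}\hat\alpha(\|d\|_\infty)$ by a single application of the dwell-time constraint: $\mu^{N_\sigma}\le\mu^{N_0}\mu^{(k-\underline{k})/N_{\rm a}}\le\mu^{N_0}(\delta/\lambda)^{k-\underline{k}}$, and $(\delta/\lambda)^{k-\underline{k}}$ is unbounded in $k$. Your parenthetical remark that ``$\delta<1$ already lets me replace $\mu^{N_\sigma}$ by its value at $k=\underline{k}$'' is not a valid inequality, since $\mu^{N_\sigma}\ge 1$. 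Likewise, the preliminary step of summing $\sum_{j}\mu^{j}\frac{1}{1-\lambda}$ does not produce $\frac{\mu^{N_\sigma}}{1-\delta}$ in general (consider $\mu$ close to $1$).

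What the paper does---and what you are missing---is to \emph{retain} the decay factors $\lambda^{k-k_{j+1}}$ in the disturbance sum, obtaining the refined expression
\[
G_{k_{N_\sigma}}^k(\lambda)+\sum_{j=0}^{N_\sigma-1}\mu^{N_\sigma-j}\lambda^{k-k_{j+1}}G_{k_j}^{k_{j+1}}(\lambda),
\]
and then apply the dwell-time bound \emph{separately to each} $\mu^{N_\sigma-j}$ over the sub-interval $[k_{j+1},k)$, using $N_\sigma-j=N_\sigma(k,k_{j+1})$. This yields $\mu^{N_\sigma-j}\le\mu^{N_0}(\delta/\lambda)^{k-k_{j+1}}$, and the $(\delta/\lambda)^{k-k_{j+1}}$ factor cancels $\lambda^{k-k_{j+1}}$ exactly, turning each summand into a $\delta$-geometric block $\mu^{N_0}\delta^{k-k_{j+1}}G_{k_j}^{k_{j+1}}(\delta)$. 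These blocks telescope to $\mu^{N_0}\sum_{i=0}^{k-1}\delta^i\le\frac{\mu^{N_0}}{1-\delta}$. The point is that the dwell-time constraint must be invoked interval-by-interval, not once globally, so that the growth of $\mu^{N_\sigma-j}$ is absorbed by the residual $\lambda$-decay rather than left exposed.
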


The proof of the lemma can be found in Appendix~\ref{app:lemma-disc}. Now we are ready to present the proof of Theorem~\ref{thm:ISS-switch-discrete}.

\begin{proof}[Proof of Theorem~\ref{thm:ISS-switch-discrete}]
The arguments of the proof  refer to the set constructions of Section~\ref{subsec:set}. To simplify notation, the dependence on $\kappa$ of $\omega(\kappa)$ in \eqref{eq:omega} and of $\mu(\kappa)$ in \eqref{eq:mu-def} will be dropped. Consider any (fixed) switching signal $\sigma: \mathbb{Z}_+ \to \mathcal{P}$ satisfying Definition~\ref{def:average-dwell-time} for any $N_0\geq 1$ and $N_{\rm a}\geq \overline{N}_{\rm a}$ where $\overline{N}_{\rm a}$ is given by \eqref{eq:avg-dwell-lb-disc}. Without loss of generality, assume that the system starts at $k=0$ and let $\{k_1, k_2,...\}$ be the switching times. We first prove part (ii) and then part (i) of the theorem. 

For part (ii), we distinguish the following cases:

\noindent \emph{Case (a):} $V_{\sigma(0)}(x_0) \leq \omega$.\\ 
If $N$ in \eqref{eq:inf-to-kappa} is unbounded, \eqref{eq:mu-bound-V} can be used at all switching times and Lemma~\ref{lem:V-sigma} ensures that \eqref{eq:ISS} holds for all $k\geq \underline{k} = 0$. Since $V_{\sigma(0)}(x_0) \leq \omega$ and $\delta \in (\lambda,1)$, \eqref{eq:ISS} implies
\begin{align}
V_{\sigma(k)}(x_k) \leq \mu^{N_0}\omega + \frac{\mu^{N_0}}{1-\delta} \hat{\alpha}(\|d\|_\infty)=:\bar{\omega} \enspace \label{eq:N-inf-omega}
\end{align}
for all $k\geq 0$, showing that $x_k\in\mathcal{M}(\bar{\omega})$ for all $k\geq 0$ with $\bar{\omega}$ as in \eqref{eq:N-inf-omega}.
When, on the other hand, $N$ is a finite number in $\mathbb{Z}_+$, Lemma~\ref{lem:V-sigma} ensures that the estimate \eqref{eq:N-inf-omega} holds over the interval $[0, k_N)$. 
By \eqref{eq:inf-to-kappa} it is clear that $x_{k_N}\in\accentset{\circ}{\mathcal{M}}_{\sigma(k_{N}-1)}(\kappa)\subset \mathcal{M}(\kappa)$, which by Remark~\ref{rem:kappa-intersect} implies 
\begin{equation}\label{eq:bound_k_N}
V_{\sigma(k_N)}(x_{k_N})\leq \omega \enspace.
\end{equation}
Since $\mu \geq 1$, the definition of $\bar{\omega}$ by \eqref{eq:N-inf-omega} implies that $\omega \leq \bar{\omega}$, so that by \eqref{eq:bound_k_N} we have $V_{\sigma(k_N)}(x_{k_N})\leq \bar{\omega}$. As a result, when $N$ is finite, the validity of the estimate \eqref{eq:N-inf-omega} can be extended over the interval $[0, k_N]$. 
Now, considering $k=k_N$ as the initial instant, the initial condition $x_{k_N}$ satisfies \eqref{eq:bound_k_N} and the requirement for Case (a) holds at $k=k_N$. Hence, applying \eqref{eq:ISS} of Lemma~\ref{lem:V-sigma} with $\underline{k}=k_N$ and propagating the same arguments as above from $k_N$ onwards shows that $V_{\sigma(k)}(x_k) \leq \bar{\omega}$ for all $k\geq 0$, proving that $x_k$ never escapes from $\mathcal{M}(\bar{\omega})$. By the expression \eqref{eq:N-inf-omega} for $\bar{\omega}$, choosing $c:=\mu^{N_0}\omega$ and $\tilde{\alpha}(\|d\|_\infty):= (\mu^{N_0}/(1-\delta)) \hat{\alpha}(\|d\|_\infty)$ in \eqref{eq:omegabar-discrete} proves part (ii) for Case (a).  

\noindent \emph{Case (b)} $V_{\sigma(0)}(x_0) > \omega$.\\ 
As in Case (a), we distinguish between two subcases based on $N$ defined by \eqref{eq:inf-to-kappa}. When $N$ is unbounded, Lemma~\ref{lem:V-sigma} ensures that \eqref{eq:ISS} holds for all $k\geq \underline{k} = 0$; that is,
\begin{align}\label{eq:ISS_k0}
V_{\sigma(k)}(x_k) \leq \mu^{N_0} \delta^k V_{\sigma(0)}(x_0) + \frac{\mu^{N_0}}{1-\delta} \hat{\alpha}(\|d\|_\infty) \enspace.
\end{align}
If $K\in\mathbb{Z}_+$ is such that
\begin{equation}\label{eq:K_Nunbounded}
K\geq \frac{\ln{(V_{\sigma(0)}(x_0)/\omega)} }{\ln{(1/\delta})} \enspace,
\end{equation}
then $\delta^k V_{\sigma(0)}(x_0) \leq \omega$ for all $k\geq K$, and \eqref{eq:ISS_k0} implies that the bound \eqref{eq:N-inf-omega} holds
for all $k\geq K$, establishing that $x_k\in\mathcal{M}(\bar{\omega})$ for all $k \geq K$. If, on the other hand, $N$ is a finite integer in $\mathbb{Z}_+$, then by the definition of $N$ in \eqref{eq:inf-to-kappa} we have $x_{k_N}\in\accentset{\circ}{\mathcal{M}}_{\sigma(k_{N}-1)}(\kappa)\subset \mathcal{M}(\kappa)$. By Remark~\ref{rem:kappa-intersect}, this condition implies that $V_{\sigma(k_N)}(x_{k_N})\leq \omega$ and the state $x_{k_N}$ satisfies the conditions for Case (a). Hence, repeating the arguments of Case (a) from $k_N$ onwards with $x_{k_N}$ as the initial condition shows  that $x_k\in\mathcal{M}(\bar{\omega})$ for all $k\geq k_N$ and with $\bar{\omega}$ as defined in \eqref{eq:N-inf-omega}. Thus, choosing $K = k_N$ proves part (ii) for Case (b) with the same choice for $c$ and $\tilde{\alpha}$ in \eqref{eq:omegabar-discrete} as in Case (a).

For part (i), when the initial condition $x_0$ satisfies Case (a), then $K=0$ and the statement is vacuously true. If, on the other hand, $x_0$ satisfies the conditions of Case (b), observe from the arguments above that $K < k_N + 1$. Indeed, if $N$ is unbounded, $k_N \to \infty$ and $K$ is given by \eqref{eq:K_Nunbounded} while if $N$ is a finite integer, $K$ was selected equal to $k_N$. Hence, \eqref{eq:beta-alpha-bound-disc} in Lemma~\ref{lem:V-sigma} holds for all $k$ with $\underline{k}=0\leq k < K$, and the proof of part (i) is completed by choosing $\beta$, $\alpha$ as in Lemma~\ref{lem:V-sigma}.
\end{proof}

\begin{rem}\label{rem:invariance}
It is of interest to discuss the behavior of the set $\mathcal{M}(\bar{\omega})$ in the absence of  disturbances; that is, when $d_k=0$ for all $k \in \mathbb{Z}_+$. In this case, $\bar{\omega} = c = \mu^{N_0} \omega \geq \omega$. It is clear from the proof of Theorem~\ref{thm:ISS-switch-discrete} that if the initial conditions $x_0$ satisfy $V_{\sigma(0)}(x_0) \leq \omega \leq c$, the solution never leaves the set $\mathcal{M}(c)$. However, this does not imply that $\mathcal{M}(c)$ is a forward invariant set of the 0-input system. Indeed, if the initial conditions $x_0$ belong in the set $\mathcal{M}(c)$ but satisfy $\omega < V_{\sigma(0)}(x_0) \leq c$, the solution may exit $\mathcal{M}(c)$ before it returns to it forever, as the proof of Case (b) indicates. Example 2 in Section~\ref{subsec:simulation} illustrates that such behavior is possible.
\end{rem}

\begin{proof}[Proof of Corollary~\ref{cor:single-equ-disc}]
With the additional assumption\footnote{Essentially, \eqref{eq:Vratio_lb_discrete} ensures that $V_p(x)$ does not converge to 0 substantially faster than $V_q(x)$ as $x\to 0$.} \eqref{eq:Vratio_lb_discrete}, \eqref{eq:mu-p-def} is bounded over the entire $\mathbb{R}^n$ without the exclusion of an open set containing $0$. Thus, $\mu$ can be used for switches occurring at any $x \in \mathbb{R}^n$ and hence $k_N\to\infty$ in Lemma~\ref{lem:V-sigma} so that \eqref{eq:beta-alpha-bound-disc} holds for all $k \geq 0$. 
\end{proof}

\begin{proof}[Proof of Corollary~\ref{cor:trap-set}]
The proof is immediate by noting that $K=0$ for any $x_0\in\cap_{p\in\mathcal{P}}\mathcal{M}_p(\omega)=:\mathcal{S}$ due to the fact that Case (a) in the proof of Theorem~\ref{thm:ISS-switch-discrete} holds for this set. Further Remark~\ref{rem:kappa-intersect} ensures that $\mathcal{S}\neq\emptyset$.
\end{proof}

\subsection{Switched Continuous Systems}
\label{subsec:cont_proof}

We begin with the following lemma, which is analogous to Lemma~\ref{lem:V-sigma} and will be used in the proof of Theorem~\ref{thm:ISS-switch-continuous}.

\begin{lemma}\label{lem:V-sigma-cont}
Consider \eqref{eq:switched-system-cont}. Let $\underline{t} \in \mathbb{R}_+$ be the initial time and $\{t_n\}_{n=1}^\infty$, $t_n\in\mathbb{R}_+$, be a strictly monotonically increasing sequence of switching instants with $t_1>\underline{t}$. Let $x(t)$ be the solution of \eqref{eq:switched-system-cont} for the corresponding switching signal. Given $\kappa>0$, define $\mu(\kappa)$ by \eqref{eq:mu-p-def}-\eqref{eq:mu-def}  and let\footnote{Define $\sigma(t_n^-):=\lim_{t\nearrow t_n}\sigma(t)$. }
\begin{equation}\label{eq:inf-to-kappa-cont}
N:=\inf\{ n\in\mathbb{Z}_+\cup \{\infty\}~|~x(t_n)\in\accentset{\circ}{\mathcal{M}}_{\sigma(t_n^-)}(\kappa)\} \enspace,
\end{equation}
be the index of the first switching instant $t_N$ for which \eqref{eq:mu-bound-V} cannot be used. Assume that for each $p \in \mathcal{P}$, the function $V_p$ satisfies \eqref{eq:V-ISS-p-1-cont} and \eqref{eq:V-ISS-cont-p-ind} for some $\lambda>0$. Choose $\delta\in(0,\lambda)$ and assume that the switching signal satisfies Definition~\ref{def:average-dwell-time-cont} for any $N_0\geq 1$ and $N_{\rm a}\geq \overline{N}_{\rm a}$ where $\overline{N}_{\rm a}$ is given by \eqref{eq:avg-dwell-lb-cont}. Then, for any $x(\underline{t})\in\mathbb{R}^n$, $d\in\mathcal{D}$, and $\underline{t} \leq t <t_N$,
\begin{equation}\label{eq:ISS-cont}
V_{\sigma(t)}(x(t)) \leq \mu^{1+N_0} \mathrm{e}^{-\delta (t-\underline{t})} V_{\sigma(\underline{t})}(x(\underline{t})) + \frac{\mu^{1+N_0}}{\delta} \hat{\alpha}(\|d\|_\infty) \enspace,
\end{equation}
where $\hat{\alpha}$ is the class-$\mathcal{K}_\infty$ function in \eqref{eq:V-ISS-cont-p-ind}. In addition, for $\underline{t} \leq t < t_N$ the solutions of \eqref{eq:switched-system-cont} satisfy
\begin{equation}\label{eq:beta-alpha-bound-cont}
\|x(t)-x_{\sigma(t)}^*\| \leq \beta(\|x(\underline{t})-x_{\sigma(\underline{t})}^*\|,t- \underline{t}) + \alpha(\|d\|_\infty)
\end{equation}
for some $\beta\in\mathcal{KL}$ and $\alpha\in\mathcal{K}_\infty$. 
\end{lemma}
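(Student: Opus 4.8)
\textit{Proof plan for Lemma~\ref{lem:V-sigma-cont}.}
The plan is to establish the Lyapunov estimate \eqref{eq:ISS-cont} by a continuous-time multiple-Lyapunov-function argument that parallels the proof of Lemma~\ref{lem:V-sigma}, and then to deduce the state estimate \eqref{eq:beta-alpha-bound-cont} from \eqref{eq:ISS-cont} by inverting the sandwich bounds \eqref{eq:V-ISS-p-1-cont}. Fix $t$ with $\underline t \le t < t_N$ and let $\underline t =: t_0 < t_1 < \dots < t_m$ be the switching instants that lie in $[\underline t,t)$; the average dwell-time bound \eqref{eq:avg-dwell-time-def-cont} guarantees $m = N_\sigma(t,\underline t) < \infty$, so the argument runs over finitely many inter-switch intervals, and on $[t_m,t]$ the active subsystem is $f_{\sigma(t_m)}$ with $\sigma(t_m)=\sigma(t)$.

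On each interval $[t_j,t_{j+1})$ only $f_{\sigma(t_j)}$ is active; since $V_{\sigma(t_j)}$ is $C^1$ and $x(\cdot)$ is locally Lipschitz (hence $t\mapsto V_{\sigma(t_j)}(x(t))$ is absolutely continuous on that interval, even though $d$ is only piecewise continuous), inequality \eqref{eq:V-ISS-cont-p-ind} gives $\frac{\mathrm{d}}{\mathrm{d}t}V_{\sigma(t_j)}(x(t)) \le -\lambda V_{\sigma(t_j)}(x(t)) + \hat\alpha(\|d\|_\infty)$ for a.e.\ $t$, and the comparison lemma (cf.~\cite{khalil2002nonlinear}) yields
\[
V_{\sigma(t_j)}(x(s)) \le \mathrm{e}^{-\lambda(s-t_j)}V_{\sigma(t_j)}(x(t_j)) + \int_{t_j}^{s}\mathrm{e}^{-\lambda(s-r)}\hat\alpha(\|d\|_\infty)\,\mathrm{d}r , \qquad s\in[t_j,t_{j+1}).
\]
(Existence of $x(\cdot)$ on $[\underline t,t_N)$ is not an obstacle: each $f_p$ is locally Lipschitz, and the inequality above together with \eqref{eq:V-ISS-p-1-cont} keeps $\|x(\cdot)-x^*_{\sigma(t_j)}\|$ bounded on every interval, precluding finite escape.) The key observation for chaining these estimates is that, because $t<t_N$, every switch $t_1,\dots,t_m$ has index strictly less than $N$, so by the definition \eqref{eq:inf-to-kappa-cont} of $N$ we have $x(t_j)\notin\accentset{\circ}{\mathcal M}_{\sigma(t_j^-)}(\kappa)$ and hence \eqref{eq:mu-bound-V} applies, giving $V_{\sigma(t_j)}(x(t_j)) \le \mu\,V_{\sigma(t_j^-)}(x(t_j))$ (recall $x$ is continuous at $t_j$). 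A routine induction on $m$ then produces the variation-of-constants bound
\[
V_{\sigma(t)}(x(t)) \le \mathrm{e}^{-\lambda(t-\underline t)}\mu^{N_\sigma(t,\underline t)}V_{\sigma(\underline t)}(x(\underline t)) + \int_{\underline t}^{t}\mathrm{e}^{-\lambda(t-r)}\mu^{N_\sigma(t,r)}\hat\alpha(\|d\|_\infty)\,\mathrm{d}r .
\]
Since $[r,t)\subseteq[\underline t,t)$, \eqref{eq:avg-dwell-time-def-cont} gives $N_\sigma(t,r)\le N_0 + (t-r)/N_{\rm a}$, and with $N_{\rm a}\ge\overline N_{\rm a} = \ln\mu/(\lambda-\delta)$ from \eqref{eq:avg-dwell-lb-cont} and $\mu\ge 1$, $N_0\ge 1$ this yields $\mu^{N_\sigma(t,r)}\le \mu^{1+N_0}\mathrm{e}^{(\lambda-\delta)(t-r)}$; substituting and using $\int_{\underline t}^t\mathrm{e}^{-\delta(t-r)}\mathrm{d}r\le 1/\delta$ gives \eqref{eq:ISS-cont}.

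For \eqref{eq:beta-alpha-bound-cont}, set $\underline\alpha:=\min_{p\in\mathcal P}\underline\alpha_p$ and $\overline\alpha:=\max_{p\in\mathcal P}\overline\alpha_p$, which are class $\mathcal K_\infty$ because $\mathcal P$ is finite, so that $\underline\alpha(\|x-x^*_p\|)\le V_p(x)\le\overline\alpha(\|x-x^*_p\|)$ for every $p$. Applying $\underline\alpha^{-1}$ to \eqref{eq:ISS-cont}, using $\underline\alpha^{-1}(a+b)\le\underline\alpha^{-1}(2a)+\underline\alpha^{-1}(2b)$ and $V_{\sigma(\underline t)}(x(\underline t))\le\overline\alpha(\|x(\underline t)-x^*_{\sigma(\underline t)}\|)$, one obtains \eqref{eq:beta-alpha-bound-cont} with
\[
\beta(s,\tau):=\underline\alpha^{-1}\!\big(2\mu^{1+N_0}\mathrm{e}^{-\delta\tau}\overline\alpha(s)\big), \qquad \alpha(s):=\underline\alpha^{-1}\!\big(\tfrac{2}{\delta}\mu^{1+N_0}\hat\alpha(s)\big),
\]
and a standard check (composition of $\mathcal K_\infty$ maps, monotonicity in $\tau$, $\beta(s,\tau)\to 0$ as $\tau\to\infty$) shows $\beta\in\mathcal{KL}$ and $\alpha\in\mathcal K_\infty$.

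I expect the main obstacle to be the bookkeeping in the chaining step rather than any hard analysis: one must verify that \eqref{eq:mu-bound-V} is legitimately invoked at \emph{every} switch occurring in $[\underline t,t)$ — which is precisely the purpose of the stopping index $N$ in \eqref{eq:inf-to-kappa-cont} — and that the powers of $\mu$ and the exponential factors are assembled consistently over a variable number of intervals. The per-interval comparison estimate, the average dwell-time substitution, and the conversion to a $\mathcal{KL}$-plus-$\mathcal K_\infty$ bound are all routine; the one technical point worth stating explicitly is the absolute continuity of $t\mapsto V_{\sigma(t)}(x(t))$ on each $[t_j,t_{j+1})$, which legitimizes the use of the comparison lemma with the differential inequality holding only almost everywhere.
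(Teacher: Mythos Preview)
Your proposal is correct and follows essentially the same approach as the paper, which only gives a sketch deferring to \cite[Theorem~3.1]{vu2007input} for the chained variation-of-constants bound and to the proof of Lemma~\ref{lem:V-sigma} for the conversion to the $\mathcal{KL}$-plus-$\mathcal{K}_\infty$ estimate. Your write-up is in fact more explicit than the paper's: the compact expression with $\mu^{N_\sigma(t,r)}$ inside the integral, the observation that the stopping index $N$ is precisely what legitimizes \eqref{eq:mu-bound-V} at every switch in $[\underline t,t)$, and the remark on absolute continuity are all correct and useful details that the paper omits.
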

\noindent The proof for Lemma~\ref{lem:V-sigma-cont} can be found in Appendix~\ref{app:lemma-cont}. Now we present the proof of Theorem~\ref{thm:ISS-switch-continuous}.

\begin{proof}[Proof of Theorem~\ref{thm:ISS-switch-continuous}]
The proof is similar to that of Theorem~\ref{thm:ISS-switch-discrete}. With reference to the set constructions of Section~\ref{subsec:set} we define $\omega$ as in \eqref{eq:omega} and $\mu$ as in \eqref{eq:mu-def}. Consider any (fixed) switching signal $\sigma: \mathbb{R}_+ \to \mathcal{P}$ that satisfies Definition~\ref{def:average-dwell-time-cont} for any $N_0\geq 1$ and $N_{\rm a}\geq \overline{N}_{\rm a}$ where $\overline{N}_{\rm a}$ is given by \eqref{eq:avg-dwell-lb-cont}. Without loss of generality, assume that the system starts at $t=0$ and let $\{t_1, t_2,...\}$ be the corresponding switching times. We begin by proving part (ii), followed by part (i). 

For part (ii), we distinguish the following cases:

\noindent \emph{Case (a):} $V_{\sigma(0)}(x(0)) \leq \omega$.\\
First, consider the case where $N$ in \eqref{eq:inf-to-kappa-cont} is unbounded. Then, due to the choice of the switching signal, Lemma~\ref{lem:V-sigma-cont} holds and ensures that \eqref{eq:ISS-cont} holds for all $t\geq \underline{t} = 0$. Since $V_{\sigma(0)}(x(0)) \leq \omega$ and $\delta > 0$, \eqref{eq:ISS-cont} implies
\begin{align}
V_{\sigma(t)}(x(t)) \leq \mu^{1+N_0} \omega +\mu^{1+N_0} \frac{1}{\delta} \hat{\alpha}(\|d\|_\infty) =: \bar{\omega} \enspace, \label{eq:N-inf-omega-cont}
\end{align}
for all $t\geq 0$, showing that $x(t)\in\mathcal{M}(\bar{\omega})$ for all $t\geq 0$ with $\bar{\omega}$ as in \eqref{eq:N-inf-omega-cont}.
Now, consider the case where $N$ is a finite index in $\mathbb{Z}_+$. Then, Lemma~\ref{lem:V-sigma-cont} ensures that the bound \eqref{eq:N-inf-omega-cont} holds over the interval $[0, t_N)$. By the definition of $N$ in \eqref{eq:inf-to-kappa-cont} it follows that $x(t_N)\in\accentset{\circ}{\mathcal{M}}_{\sigma(t_N^-)}(\kappa)\subset \mathcal{M}(\kappa)$; thus, Remark~\ref{rem:kappa-intersect} implies
\begin{equation}\label{eq:bound_t_N}
V_{\sigma(t_N)}(x(t_N))\leq \omega \enspace.
\end{equation}
Since $\mu\geq 1$, the definition of $\bar{\omega}$ by \eqref{eq:N-inf-omega-cont} implies that $\omega \leq \bar{\omega}$ and so $V_{\sigma(t_N)}(x(t_N))\leq \bar{\omega}$. Hence, the bound \eqref{eq:N-inf-omega-cont} holds over the closed interval $[0, t_N]$. Furthermore, since starting at time $t_N$ with initial condition $x(t_N)$ satisfies \eqref{eq:bound_t_N}, Case (a) continues to hold, and the same arguments as above can be repeated from $t_N$ onwards. As a result,  $V_{\sigma(t_N)}(x(t_N))\leq \bar{\omega}$ for all $t \geq 0$, thereby proving that the solution $x(t)$ of \eqref{eq:switched-system-cont} is trapped in $\mathcal{M}(\bar{\omega})$. By the expression \eqref{eq:N-inf-omega-cont} for the bound $\bar{\omega}$, choosing $c:=\mu^{1+N_0}\omega$ and $\tilde{\alpha}(\|d\|_\infty):= (\mu^{1+N_0}/\delta) \hat{\alpha}(\|d\|_\infty)$ in \eqref{eq:omegabar-continuous} proves part (ii) for Case (a).  

\noindent \emph{Case (b):} $V_{\sigma(0)}(x(0)) > \omega$.\\
Again, we differentiate two cases based on $N$. If $N$ is unbounded, then \eqref{eq:ISS-cont} holds for all $t\geq \underline{t} = 0$; that is,
\begin{equation}\label{eq:ISS-cont-t0}
V_{\sigma(t)}(x(t)) \leq \mu^{1+N_0} \mathrm{e}^{-\delta t} V_{\sigma(0)}(x(0)) + \frac{\mu^{1+N_0}}{\delta} \hat{\alpha}(\|d\|_\infty) \enspace.
\end{equation}
Clearly, if $T\in\mathbb{R}_+$ satisfies
\[
T\geq \frac{\ln{(V_{\sigma(0)}(x(0))/\omega)} }{\delta} \enspace,
\]
then $\mathrm{e}^{-\delta t} V_{\sigma(0)}(x(0)) \leq \omega$ for all $t\geq T$, and \eqref{eq:ISS-cont-t0} implies that the estimate \eqref{eq:N-inf-omega-cont} holds for all $t \geq T$. As a result, $x(t) \in \mathcal{M}(\bar{\omega})$ for all $t \geq T$. If $N$ is finite, then by the definition \eqref{eq:inf-to-kappa-cont} of $N$, we have $x(t_N)\in\accentset{\circ}{\mathcal{M}}_{\sigma(t_N^-)}(\kappa)\subset \mathcal{M}(\kappa)$. By Remark~\ref{rem:kappa-intersect}, this condition implies that $V_{\sigma(t_N)}(x(t_N))\leq \omega$ and the same arguments as in Case (a) can be applied from $t_N$ onwards. Hence, selecting $T = t_N$ proves part (ii) for Case (b) with $c$ and $\tilde{\alpha}$ as elected in Case (a) above.

For part (i), when the initial condition $x(0)$ satisfies Case (a), then $T=0$ and the statement is trivially true. If, on the other hand, $x(0)$ satisfies Case (b), then $T \leq t_N$ and \eqref{eq:beta-alpha-bound-cont} in Lemma~\ref{lem:V-sigma-cont} holds for all $0 \leq t < T$. Thus, the proof of part (i) is completed by choosing the functions $\beta$, $\alpha$ as in Lemma~\ref{lem:V-sigma-cont}.
%
\end{proof}

\begin{proof}[Proof of Corollary~\ref{cor:single-equ-cont}]
With the additional assumption \eqref{eq:Vratio_lb_cont}, \eqref{eq:mu-p-def} holds over the entire $\mathbb{R}^n$ without the exclusion of an open set containing $0$. 
Thus, $\mu$ can be used for switches occurring at any $x \in \mathbb{R}^n$ and hence $t_N\to\infty$ in Lemma~\ref{lem:V-sigma-cont} so that \eqref{eq:beta-alpha-bound-cont} holds for all $t \geq 0$. 
\end{proof}

\begin{proof}[Proof of Corollary~\ref{cor:trap-set-cont}]
The proof is immediate by noting that $T=0$ for any $x(0)\in\cap_{p\in\mathcal{P}}\mathcal{M}_p(\omega)=:\mathcal{S}$ because Case (a) in the proof of Theorem~\ref{thm:ISS-switch-continuous} holds for this set. Further Remark~\ref{rem:kappa-intersect} ensures that $\mathcal{S}\neq\emptyset$.
\end{proof}

\section{Implementation Aspects}
\label{sec:examples}

This section addresses certain aspects that are of practical interest regarding the 
application of Theorems~\ref{thm:ISS-switch-discrete} and~\ref{thm:ISS-switch-continuous}.

\subsection{Analytical Bounds For Quadratic Lyapunov Functions}

Given the design parameters $\kappa$ and $\delta$, computing the average dwell time bound $\overline{N}_{\rm a}$ by \eqref{eq:avg-dwell-lb-disc} or \eqref{eq:avg-dwell-lb-cont}, and the estimate $\mathcal{M}({\bar{\omega}})$ of the compact set within which solutions are ultimately bounded by \eqref{eq:omega-bar-disc} or \eqref{eq:omega-bar-cont} requires the computation of $\mu(\kappa)$ and $\omega(\kappa)$, which can be challenging. This fact has been pointed in~\cite{basar2010}, where the authors highlighted the need for efficient tools for computing $\mu(\kappa)$ and $\omega(\kappa)$ as numerical computations based on discretizing the state-space become impractical as the dimension of the system grows. However, in the case where quadratic functions $V_p$ are used, analytical bounds for $\mu(\kappa)$ and $\omega(\kappa)$ can be found, as the following proposition shows.

\begin{prop}\label{prop:mu-compute}
Let $V_p(x)=(x-x_p^*)^{\rm T} S_p (x-x_p^*)$ for all $p\in\mathcal{P}$ be a family of positive definite quadratic functions and $\lambda_{\min}(S_p)$ be the minimum and $\lambda_{\max}(S_p)$  be the maximum eigenvalues of $S_p$, respectively. Given $\kappa>0$, define $\omega(\kappa)$ by \eqref{eq:omega} and $\mu(\kappa)$ by \eqref{eq:mu-def}. Then, the following hold

\vspace{-5mm}
\small 
\begin{align}
\omega(\kappa) & \leq \max_{p,q\in\mathcal{P}} \Bigg( \lambda_{\max}(S_p) \bigg( \sqrt{\frac{\kappa}{\lambda_{\min}(S_q)}} + \|x_p^*-x_q^*\| \bigg)^2 \Bigg) \label{eq:omega-bound} \\
\mu(\kappa) & \leq \max_{p,q\in\mathcal{P}} \Bigg( \frac{\lambda_{\max}(S_q)}{\lambda_{\min}(S_p)}\Big( 1 + \sqrt{\frac{\lambda_{\max}(S_p)}{\kappa}}\|x_p^*-x_q^*\| \Big)^2 \Bigg). \label{eq:mu-bound}
\end{align}
\normalsize
\end{prop}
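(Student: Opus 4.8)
The plan is to translate the two-sided quadratic estimates $\lambda_{\min}(S_p)\|x-x_p^*\|^2 \le V_p(x) \le \lambda_{\max}(S_p)\|x-x_p^*\|^2$ --- which play the role of the class-$\mathcal{K}_\infty$ bounds $\underline{\alpha}_p,\overline{\alpha}_p$ in \eqref{eq:V-ISS-1} --- into bounds on $\omega(\kappa)$ and $\mu(\kappa)$, using nothing more than the triangle inequality to bridge the distinct centers $x_p^*$ and $x_q^*$.

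For \eqref{eq:omega-bound}: take any $x\in\mathcal{M}(\kappa)$; by \eqref{eq:M-union} there is some $q\in\mathcal{P}$ with $V_q(x)\le\kappa$, whence the lower quadratic bound gives $\|x-x_q^*\|\le\sqrt{\kappa/\lambda_{\min}(S_q)}$. For an arbitrary $p\in\mathcal{P}$, the upper quadratic bound together with $\|x-x_p^*\|\le\|x-x_q^*\|+\|x_p^*-x_q^*\|$ yields
\[
V_p(x)\le\lambda_{\max}(S_p)\Big(\sqrt{\kappa/\lambda_{\min}(S_q)}+\|x_p^*-x_q^*\|\Big)^{2}.
\]
Since $\omega(\kappa)$ in \eqref{eq:omega} is the maximum of $V_p(x)$ over $p\in\mathcal{P}$ and $x\in\mathcal{M}(\kappa)$, taking maxima over $p$ and $q$ gives \eqref{eq:omega-bound}.

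For \eqref{eq:mu-bound}: fix $p,q\in\mathcal{P}$ and $x\notin\accentset{\circ}{\mathcal{M}}_p(\kappa)$, i.e. $V_p(x)\ge\kappa$. Combining the upper bound on $V_q$, the lower bound on $V_p$, and $\|x-x_q^*\|\le\|x-x_p^*\|+\|x_p^*-x_q^*\|$ gives
\[
\frac{V_q(x)}{V_p(x)}\le\frac{\lambda_{\max}(S_q)}{\lambda_{\min}(S_p)}\Big(1+\frac{\|x_p^*-x_q^*\|}{\|x-x_p^*\|}\Big)^{2}.
\]
The right-hand side is decreasing in $\|x-x_p^*\|$, so the supremum over $x\notin\accentset{\circ}{\mathcal{M}}_p(\kappa)$ in \eqref{eq:mu-p-def} is governed by the smallest value of $\|x-x_p^*\|$ compatible with $V_p(x)\ge\kappa$; by the \emph{upper} quadratic bound $V_p(x)\le\lambda_{\max}(S_p)\|x-x_p^*\|^{2}$ that smallest value is $\sqrt{\kappa/\lambda_{\max}(S_p)}$ (attained along an eigenvector of $S_p$ for $\lambda_{\max}(S_p)$). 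Replacing $\|x-x_p^*\|$ by this lower bound and maximizing over $p,q$ as in \eqref{eq:mu-def} produces \eqref{eq:mu-bound}.

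The computations are elementary, so I do not expect a serious obstacle; the one step deserving care is the monotonicity argument in the second part --- namely, that the \emph{sharp} lower bound $\sqrt{\kappa/\lambda_{\max}(S_p)}$ on $\|x-x_p^*\|$ must come from the \emph{upper} quadratic estimate on $V_p$, not from the matching lower estimate, and that using it is both legitimate (because the bound on the ratio is monotone in $\|x-x_p^*\|$) and necessary to recover the stated constant. I would also note in passing that $\kappa>0$ and positive definiteness of each $S_p$ keep every eigenvalue ratio and radical finite and well defined, and that finiteness of $\mathcal{P}$ turns all the suprema above into genuine maxima.
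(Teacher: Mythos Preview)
Your proposal is correct and follows essentially the same approach as the paper: both proofs use the two-sided eigenvalue bounds on $V_p$, the triangle inequality to bridge the distinct centers, and---for the $\mu(\kappa)$ bound---the observation that $V_p(x)\ge\kappa$ together with the \emph{upper} quadratic estimate forces $\|x-x_p^*\|\ge\sqrt{\kappa/\lambda_{\max}(S_p)}$. Your explicit remark about why it is the upper (not lower) estimate that yields the correct radial lower bound is exactly the point the paper uses without comment.
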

\begin{proof}
Since the functions $V_p$ are quadratic, for all $x\in\mathbb{R}^n$
\begin{equation}\label{eq:quad-V-ISS-1}
\lambda_{\min}(S_p)\|x-x_p^*\|^2 \leq V_p(x) \leq \lambda_{\max}(S_p)\|x-x_p^*\|^2 \enspace. 
\end{equation}

We will first show \eqref{eq:omega-bound}. From \eqref{eq:M-union}, \eqref{eq:omega} and since $\mathcal{P}$ is a finite set, it follows that 
\begin{equation}\label{eq:omega-transform}
\omega(\kappa):=\max_{p\in\mathcal{P}} \max_{x\in \mathcal{M}(\kappa)} V_p(x) = \max_{p,q\in\mathcal{P}} \max_{x\in\mathcal{M}_q(\kappa)}V_p(x).
\end{equation}
Consider $\max_{x\in\mathcal{M}_q(\kappa)} V_p(x)$. For any $x\in\mathcal{M}_q(\kappa)$, we have
\begin{align}
V_p(x) & \leq \lambda_{\max}(S_p) \|x-x_p^*\|^2 \label{eq:omega-1}\\
& \leq \lambda_{\max}(S_p) \big( \|x-x_q^*\| + \|x_q^* - x_p^*\| \big)^2 \label{eq:omega-2}\\
& \leq \lambda_{\max}(S_p) \bigg( \sqrt{\frac{\kappa}{\lambda_{\min}(S_q)}} + \|x_q^* - x_p^*\| \bigg)^2 \enspace, \label{eq:omega-3}
\end{align}
where \eqref{eq:omega-1} follows from the second inequality of \eqref{eq:quad-V-ISS-1}, which further leads to \eqref{eq:omega-2} by the use of triangle inequality. 
Finally, \eqref{eq:omega-3} follows from noting that for any $x\in\mathcal{M}_q(\kappa)$, the first inequality of \eqref{eq:quad-V-ISS-1} provides the bound $\|x-x_q^*\|\leq \sqrt{\kappa/\lambda_{\min}(S_q)}$, which, on using in \eqref{eq:omega-2}, gives \eqref{eq:omega-3}. As \eqref{eq:omega-3} holds for any $x\in\mathcal{M}_q(\kappa)$, we have shown that $\max_{x\in\mathcal{M}_q(\kappa)} V_p(x)$ satisfies the bound in \eqref{eq:omega-3}, which by \eqref{eq:omega-transform} gives \eqref{eq:omega-bound}.

To show \eqref{eq:mu-bound}, from \eqref{eq:mu-p-def}, \eqref{eq:mu-def} and the finite $\mathcal{P}$ we have 
\begin{equation}\label{eq:mu-transform}
\mu(\kappa) = \max_{p,q\in\mathcal{P}} \sup_{x\not\in\accentset{\circ}{\mathcal{M}}_p(\kappa)} \frac{V_q(x)}{V_p(x)}.
\end{equation}
Consider $\sup_{x\not\in\accentset{\circ}{\mathcal{M}}_p(\kappa)} V_q(x)/V_p(x)$. For any $x\not\in\accentset{\circ}{\mathcal{M}}_p(\kappa)$,
\begin{align}
\frac{V_q(x)}{V_p(x)} & \leq \frac{\lambda_{\max}(S_q)\|x-x_q^*\|^2}{\lambda_{\min}(S_p)\|x-x_p^*\|^2} \label{eq:mu-1}\\
& \leq \frac{\lambda_{\max}(S_q)}{\lambda_{\min}(S_p)}\Big( 1 + \frac{\|x_p^*-x_q^*\|}{\|x-x_p^*\|} \Big)^2 \enspace, \label{eq:mu-2}
\end{align}
where \eqref{eq:mu-1} follows from \eqref{eq:quad-V-ISS-1}, and \eqref{eq:mu-2} follows from the triangle inequality. For $x\not\in\accentset{\circ}{\mathcal{M}}_p(\kappa)$, $V_p(x)\geq \kappa$ which by the second inequality of \eqref{eq:quad-V-ISS-1} gives $\|x-x_p^*\|\geq \sqrt{\kappa/\lambda_{\max}(S_p)}$. Using this in \eqref{eq:mu-2} followed by \eqref{eq:mu-transform} gives \eqref{eq:mu-bound}.
\end{proof}

\begin{figure*}[t]
\centering
\subfigure[]
{
\includegraphics[width=0.4\textwidth]{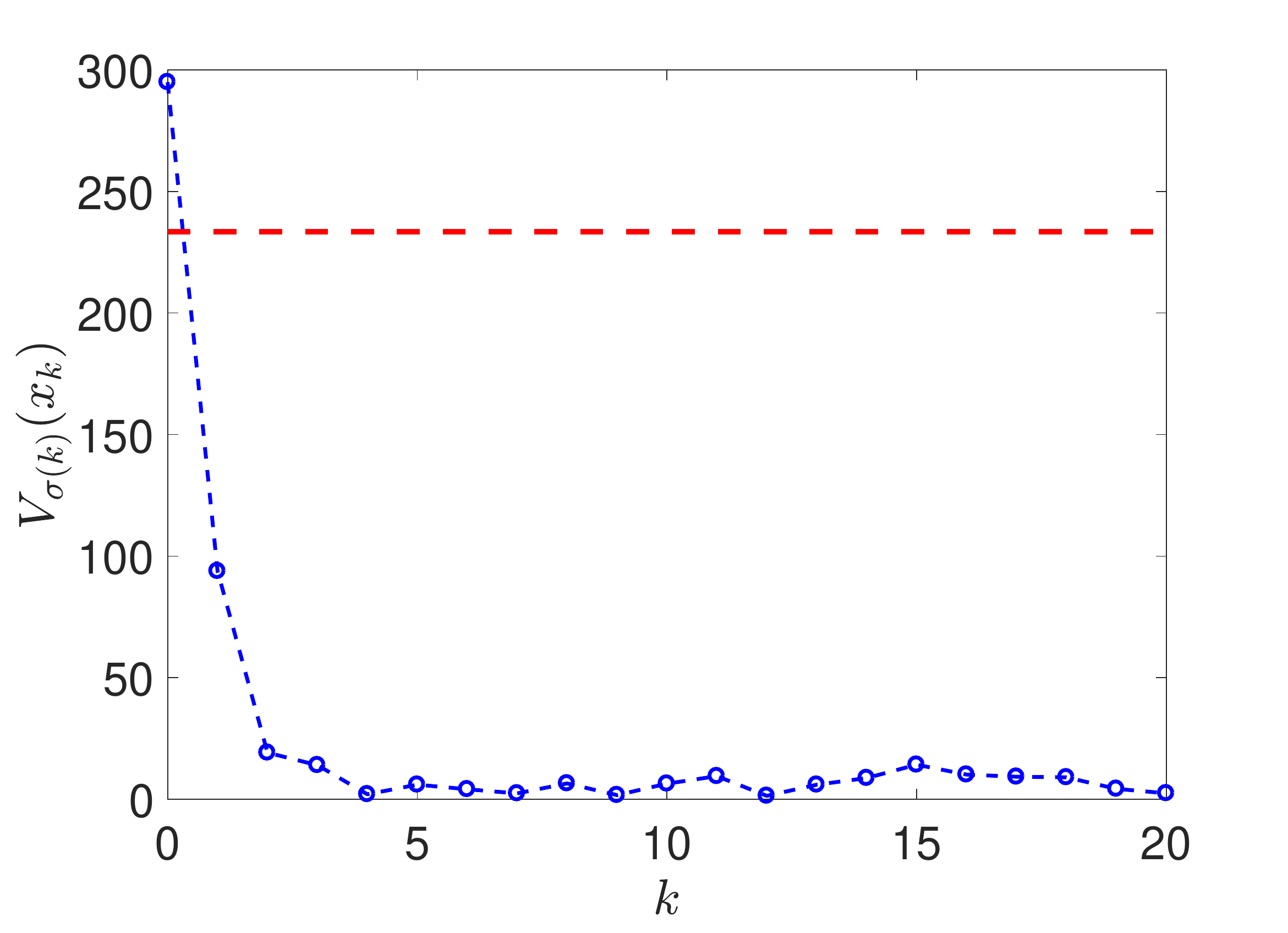}
\label{fig:V-sigma}
}
\centering
\subfigure[]
{
\includegraphics[width=0.4\textwidth]{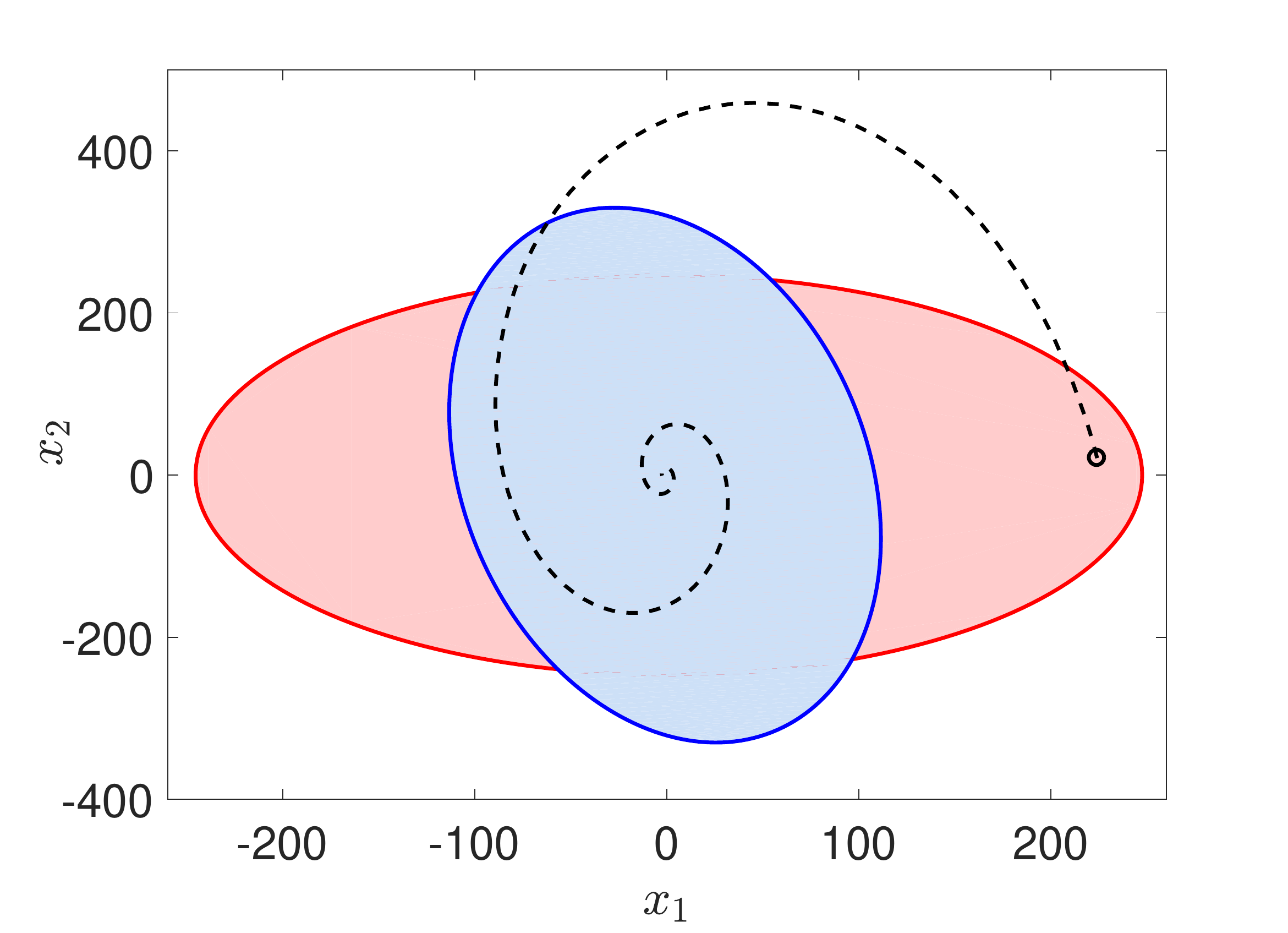}
\label{fig:counter}
}
\vskip -10pt
\caption{\textbf{(a)} Evolution of $V_{\sigma(k)}(x_k)$ (blue circles) over $k$ for switching among the members of the family \eqref{eq:disc-family-sim} while satisfying Theorem~\ref{thm:ISS-switch-discrete}. \textbf{(b)} State trajectory evolving under $d(t)\equiv 0$ that escapes $\mathcal{M}(c)$ before eventually getting trapped in it despite satisfying Theorem~\ref{thm:ISS-switch-continuous}. }
\vskip -10pt
\end{figure*}

Note that if $\underline{\alpha}_p(\cdot)$ and $\overline{\alpha}_p(\cdot)$ in Theorems~\ref{thm:ISS-switch-discrete} and \ref{thm:ISS-switch-continuous} are available for all $p\in\mathcal{P}$, then, following steps similar to those in the proof of Proposition~\ref{prop:mu-compute}, analytical bounds for $\omega(\kappa)$ can be obtained for these non-quadratic Lyapunov functions as well; however, we cannot obtain a general analytical bound for $\mu(\kappa)$. Note that the bound \eqref{eq:omega-bound} for $\omega(\kappa)$ has also been obtained in \cite{makarenkov2017dwell}.

\subsection{Numerical Examples}
\label{subsec:simulation}

\noindent \emph{Example 1.} Consider the family of linear discrete-time systems 
\begin{align}\label{eq:disc-family-sim}
x_{k+1} = A_p x_k + B_p d_k + C_p, & & p \in \mathcal{P} = \{1,2,3 \}  \enspace,
\end{align}
where $x_k\in\mathbb{R}^4$ is the state and $d_k\in\mathbb{R}$ is the disturbance. For each $p$ the eigenvalues of $A_p\in\mathbb{R}^{4\times 4}$ are within the unit disc, and $B_p\in\mathbb{R}^4$, $C_p\in\mathbb{R}^4$ are constant matrices; provided in Appendix~\ref{app:sim}. The fixed points for the individual subsystems are $x^*_1=(0.5,0,0,0)$, $x^*_2=(-0.5,0,0,0)$, and $x^*_3=(0,0.5,0,0)$. 

From \cite[Example~3.4]{jiang2001input} it is known that the Lyapunov function of an exponentially stable linear discrete-time system is also an ISS-Lyapunov function; hence, the quadratic ISS-Lyapunov functions can be found by solving the discrete Lyapunov equation on the 0-disturbance system. Using the analytical expressions in \cite[Example~3.4]{jiang2001input}, we compute $\lambda=0.68$ in \eqref{eq:V-ISS-2}. Using Proposition~\ref{prop:mu-compute} with $\kappa=10$ gives $\mu(\kappa)\leq 2.57$ and $\omega(\kappa) \leq 25.2$; it is remarked that since the state is in $\mathbb{R}^4$, estimating $\mu(\kappa)$ and $\omega(\kappa)$ without Proposition~\ref{prop:mu-compute} would be computationally intensive. With $\delta = 0.84$ we compute the lower bound on the average dwell time \eqref{eq:avg-dwell-lb-disc} as $N_{\rm a}\geq 4.47$. In Fig.~\ref{fig:V-sigma}, we provide the evolution of this switched system with initial state $x_0 = (5,10,-7.5,5)$ and a switching signal that satisfies \eqref{eq:avg-dwell-time-def} with $N_0=2$ and $N_{\rm a}=4.47$. The disturbance in this example satisfies $\|d\|_\infty = 10$ and is generated by sampling a uniform distribution on $[-10,10]$. The red line in Fig.~\ref{fig:V-sigma} is the computation of the level of the trapping compact set $\mathcal{M}(\bar{\omega})$ with $\bar{\omega}$ as in \eqref{eq:omega-bar-disc}. Note that $V_{\sigma(k)}(x_k)$ is eventually trapped below it.

\vspace{+0.05in}
\noindent \emph{Example 2.}  This example shows that the ultimately trapping set for zero disturbance $\mathcal{M}(c)$ is not invariant. Consider the family of continuous linear switched systems
\begin{align}\label{eq:cont-family-sim}
\dot{x} = A_p x + C_p, & & p \in \mathcal{P} = \{1,2\}  \enspace,
\end{align}
where $x\in\mathbb{R}^2$, and $A_p\in\mathbb{R}^{2\times 2}$, $C_p\in\mathbb{R}^2$ are constant matrices provided in Appendix~\ref{app:sim}. For each $p\in\{1,2\}$, $A_p$ is Hurwitz. The equilibria for the members of the family \eqref{eq:cont-family-sim} are $x_1^*=(1,0)$ and $x_2^*=(-1,0)$. A quadratic Lyapunov function for each subsystem is computed using the continuous Lyapunov equation. Choosing $\kappa=1$ and invoking Proposition~\ref{prop:mu-compute} we compute $\mu(\kappa) \leq 31.88$ and $\omega(\kappa) \leq  29.89$. With $\lambda=0.5$ in \eqref{eq:V-ISS-cont-p-ind} and choosing $\delta = 0.1$, we can compute $\overline{N}_{\rm a}=8.65$. From \eqref{eq:omega-bar-cont}, we have $c=\mu^{1+N_0}\omega$ in Theorem~\ref{thm:ISS-switch-continuous} for any $N_0\geq 1$. Choosing $N_0=1$, we get $c=3.064\times 10^4$. The set $\mathcal{M}(c)$ is plotted in Fig.~\ref{fig:counter} as the union of the red and blue ellipses. A solution starting from $x(0)=(224,21)$ evolves under the influence of the subsystem $p=2$ whose sublevel set is blue. Since we do not switch during the simulation we trivially satisfy the average dwell time constraint with $N_0=1$ and $N_{\rm a} \geq \overline{N}_{\rm a}$, hence we satisfy Theorem~\ref{thm:ISS-switch-continuous} with zero disturbance. Note that $x(0)$ marked by the circle in the plot lies within $\mathcal{M}(c)$, still the black state trajectory escapes the set $\mathcal{M}(c)$ before eventually getting trapped in it, demonstrating that the set is not invariant. 




\section{Conclusions}
\label{sec:conclusions}
In this paper we proposed a framework within which we can rigorously analyze the robustness under exogenous disturbances of continuous and discrete switched systems with multiple equilibria. It was shown that, if an average dwell-time constraint is satisfied by the switching signals, the solutions of the switched system are uniformly bounded and uniformly ultimately bounded in an explicitly characterizable compact set. Furthermore, the lower bound on the average dwell-time is analytically computable if the ISS-Lyapunov functions of the individual subsystems are quadratic. Although, our motivation for studying this class of systems arises from switching among motion primitives for robots under exogenous disturbances, the results of this paper are relevant to a much broader class of applications in which switching occurs among systems that do not share the same equilibrium point.

\appendices

\section{} 
\label{app:lemma-disc}

\begin{proof}[Proof of Lemma~\ref{lem:V-sigma}]
The statement of Lemma~\ref{lem:V-sigma} holds for an arbitrary initial time $\underline{k}$; to avoid cumbersome expressions, we prove the result for $\underline{k}=0$ noting that the same proof carries to the case of an arbitrary $\underline{k}$ by replacing $k$ with $k-\underline{k}$ in the expressions. We consider switching signals $\sigma : \mathbb{Z}_+ \to \mathcal{P}$ that satisfy Definition~\ref{def:average-dwell-time} for $N_0 \geq 1$ and $N_{\rm a} \geq \overline{N}_{\rm a}$, where $\overline{N}_{\rm a}$ is given by \eqref{eq:avg-dwell-lb-disc}. Let $\{k_1, k_2,...\}$ be a sequence of switching times for such signal.  
%
For notational compactness, define
\begin{align}
G_a^b(r):=
\begin{cases}
\sum_{j=0}^{b-a-1} r^j = \frac{1-r^{b-a}}{1-r} & \mathrm{if}~b>a\\
0 & \mathrm{if}~b=a
\end{cases} \enspace. \nonumber
\end{align}
where $a,b\in\mathbb{Z}_+$, $b\geq a$, and $0<r<1$. Further, we denote $N_{\sigma}(k,0)$ by $N_\sigma$ unless a different time window is specified.

Using \eqref{eq:V-ISS-2} over the interval $0\leq k < k_1$ until the first switching occurs, results in  
\begin{equation}\label{eq:ind-1}
V_{\sigma(k)}(x_k) \leq \lambda^k V_{\sigma(0)}(x_0) + G_0^k(\lambda)\hat{\alpha}(\|d\|_\infty) \enspace.
\end{equation}
Now, since $\mu \geq 1$ by \eqref{eq:mu_1} and $\lambda<\delta$, \eqref{eq:ind-1} results in
\begin{equation}\label{eq:ind-1-N1}
V_{\sigma(k)}(x_k) \leq \mu^{N_0}\delta^k V_{\sigma(0)}(x_0) + \frac{\mu^{N_0}}{1-\delta}\hat{\alpha}(\|d\|_\infty) \enspace,
\end{equation}
where we have used $G_0^k(\lambda) \leq G_0^k(\delta) \leq \frac{1}{1-\delta}$. Hence, \eqref{eq:ISS} holds for all $0 \leq k < k_1$, completing the proof if $N=1$.

Next, if $N \neq 1$ so that $x_{k_1} \notin \accentset{\circ}{\mathcal{M}}_{\sigma(k_1-1)}$, we can apply \eqref{eq:mu-bound-V} to relate the values at the switching state $x_{k_1}$ of the Lyapunov functions of the presently active system $\sigma(k_1)$ and of the formerly active system $\sigma(k_1-1)=\sigma(0)$. Hence, using \eqref{eq:ind-1} first to obtain the bound $V_{\sigma(k_1-1)}(x_{k_1})\leq \lambda^{k_1} V_{\sigma(0)}(x_0) + G_0^{k_1}(\lambda)\hat{\alpha}(\|d\|_\infty)$, we can then apply \eqref{eq:mu-bound-V} to obtain $V_{\sigma(k_1)}(x_{k_1})\leq\mu \lambda^{k_1} V_{\sigma(0)}(x_0) + \mu G_0^{k_1}(\lambda)\hat{\alpha}(\|d\|_\infty)$. This is used in \eqref{eq:V-ISS-2} to write the following bound for $k_1\leq k < k_2$, 

\vspace{-3mm}
\small
\begin{equation}\nonumber
V_{\sigma(k)}(x_k) \leq \mu\lambda^k V_{\sigma(0)}(x_0) + \Big(G_{k_1}^k(\lambda) + \mu \lambda^{k-k_1} G_0^{k_1}(\lambda)\Big)\hat{\alpha}(\|d\|_\infty) .
\end{equation}
\normalsize
Inductively repeating this process for $N_\sigma$ switches with $1\leq N_\sigma < N$ we have the following bound for $k_{N_\sigma}\leq k < k_{N_\sigma+1}$,
\begin{align}
& V_{\sigma(k)}(x_k) \leq \mu^{N_\sigma} \lambda^k V_{\sigma(0)}(x_0)  \label{eq:V-induction}\\
& + \bigg(G_{k_{N_\sigma}}^k(\lambda)+ \sum_{j=0}^{N_\sigma-1} \mu^{N_\sigma-j} \lambda^{k-k_{j+1}} G_{k_j}^{k_{j+1}}(\lambda) \bigg) \hat{\alpha}(\|d\|_\infty) \nonumber \enspace,
\end{align}
where $k_j=0$ for $j=0$. We treat the state- and disturbance-dependent terms in the upper bound of \eqref{eq:V-induction} separately. For the state-dependent term, recall that $\mu \geq 1$ by \eqref{eq:mu_1} and use \eqref{eq:avg-dwell-time-def} followed by $N_{\rm a}\geq \overline{N}_{\rm a}$ where $\overline{N}_{\rm a}$ satisfies \eqref{eq:avg-dwell-lb-disc} to get,
\begin{align}
\mu^{N_\sigma} \lambda^k V_{\sigma(0)}(x_0) & \leq \mu^{N_0} (\lambda \mu^{1/N_{\rm a}})^k V_{\sigma(0)}(x_0) \nonumber\\
& \leq \mu^{N_0} \delta^k V_{\sigma(0)}(x_0) \enspace.\label{eq:IC-part}
\end{align}
To proceed with the disturbance-dependent term, first note that $N_\sigma-j=N_\sigma(k,k_{j+1})$. Hence, using \eqref{eq:avg-dwell-time-def} on $N_\sigma(k,k_{j+1})$ followed by $N_{\rm a}\geq \overline{N}_{\rm a}$ with $\overline{N}_{\rm a}$ given by \eqref{eq:avg-dwell-lb-disc} results in
%
\begin{align}
\mu^{N_\sigma-j} & \leq \mu^{N_0} \mu^{(k-k_{j+1})\ln(\delta/\lambda)/\ln(\mu)} \nonumber\\
& \leq \mu^{N_0}(\delta/\lambda)^{k-k_{j+1}} \enspace. \label{eq:mu-j-bound}
\end{align}
Using \eqref{eq:mu-j-bound} in the summation in \eqref{eq:V-induction} gives
\small
\begin{align}
\sum_{j=0}^{N_\sigma-1} \lambda^{k-k_{j+1}} \mu^{N_\sigma-j} G_{k_j}^{k_{j+1}}(\lambda) & \leq \mu^{N_0}\sum_{j=0}^{N_\sigma-1} \delta^{k-k_{j+1}} G_{k_j}^{k_{j+1}}(\lambda) \nonumber \\
& \leq \mu^{N_0}\sum_{j=0}^{N_\sigma-1} \delta^{k-k_{j+1}} G_{k_j}^{k_{j+1}}(\delta) \label{eq:Gk-sum-1}
\end{align}
\normalsize
where the last inequality follows from the fact that $\lambda < \delta$, hence $G_{k_j}^{k_{j+1}}(\lambda) \leq G_{k_j}^{k_{j+1}}(\delta)$ with equality holding in the case when $k_{j+1} = k_j+1$. It can be easily verified that
\begin{align} \nonumber
\delta^{k-k_{j+1}} G_{k_j}^{k_{j+1}}(\delta) = \delta^{k-k_{j+1}} + \delta^{k-k_{j+1}+1} + ... + \delta^{k-k_j-1} \enspace,
\end{align}
which, on summing from $j=0$ to $j=N_\sigma-1$ and after some algebraic manipulation, results in
\begin{align}\label{eq:Gk-sum-2}
\sum_{j=0}^{N_\sigma-1}  \delta^{k-k_{j+1}} G_{k_j}^{k_{j+1}}(\delta)  &= \sum_{j=k-k_1}^{k-1}  \delta^j + \sum_{j=k-k_2}^{k-k_1-1} \delta^j + ... \nonumber \\
&+ \sum_{j=k-N_\sigma}^{k - k_{N_\sigma-1} - 1} \delta^j = \sum_{j=k-k_{N_\sigma}}^{k-1} \delta^j .
\end{align}
Using \eqref{eq:Gk-sum-2} in \eqref{eq:Gk-sum-1} gives
\begin{align}
\sum_{j=0}^{N_\sigma-1} \lambda^{k-k_{j+1}} \mu^{N_\sigma-j} G_{k_j}^{k_{j+1}}(\lambda) \leq \mu^{N_0}\sum_{j=k-k_{N_\sigma}}^{k-1} \delta^j \enspace. \label{eq:Gk-sum-3}
\end{align}
Additionally, as $\mu\geq 1$ by \eqref{eq:mu_1} and $\lambda<\delta$,
\begin{align}\label{eq:Gk-sum-4}
G_{k_{N_\sigma}}^k(\lambda) \leq \mu^{N_0} G_{k_{N_\sigma}}^k(\delta)  = \mu^{N_0}\sum_{j=0}^{k-k_{N_\sigma}-1} \delta^j \enspace.
\end{align}
Thus, using \eqref{eq:Gk-sum-3} and \eqref{eq:Gk-sum-4} on the disturbance dependent term of the upper bound in \eqref{eq:V-induction} gives
\begin{align}
 & \bigg(G_{k_{N_\sigma}}^k(\lambda)+ \sum_{j=0}^{N_\sigma-1} \lambda^{k-k_{j+1}} \mu^{N_\sigma-j} G_{k_j}^{k_{j+1}}(\lambda) \bigg) \hat{\alpha}(\|d\|_\infty) \nonumber\\
&\leq \mu^{N_0}\sum_{j=0}^{k-1} \delta^j \hat{\alpha}(\|d\|_\infty)  \leq \frac{\mu^{N_0}}{1-\delta} \hat{\alpha}(\|d\|_\infty) \enspace. \label{eq:gain-part}
\end{align}
Hence, upper bounding \eqref{eq:V-induction} with \eqref{eq:IC-part} and \eqref{eq:gain-part} gives \eqref{eq:ISS} for $k_{N_\sigma}\leq k<k_{N_\sigma + 1}$ for any $1\leq N_\sigma <N$, i.e., for all $k_1\leq k<k_N$. Further, by \eqref{eq:ind-1-N1}, \eqref{eq:ISS} holds for $0\leq k <k_1$. Hence, \eqref{eq:ISS} holds for all $0\leq k < k_N$.

Now we turn our attention to \eqref{eq:beta-alpha-bound-disc}. Using \eqref{eq:V-ISS-1} in \eqref{eq:ISS},
\begin{align}
\underline{\alpha}_{\sigma(k)}(\|x_k-x_{\sigma(k)}^*\|) & \leq \mu^{N_0} \delta^k \overline{\alpha}_{\sigma(0)}(\|x_0-x_{\sigma(0)}^*\|) \nonumber \\
& + \frac{\mu^{N_0}}{1-\delta} \hat{\alpha}(\|d\|_\infty) \enspace. \label{eq:beta-alpha-bound-1}
\end{align}
As $\underline{\alpha}_{\sigma(k)}^{-1}\in\mathcal{K}_\infty$ is monotonically increasing, by \eqref{eq:beta-alpha-bound-1}

\vspace{-3mm}
\small
\begin{align}
& \|x_k-x_{\sigma(k)}^*\| \nonumber \\
& \leq \underline{\alpha}_{\sigma(k)}^{-1} \bigg(\mu^{N_0} \delta^k \overline{\alpha}_{\sigma(0)}(\|x_0-x_{\sigma(0)}^*\|)   + \frac{\mu^{N_0}}{1-\delta} \hat{\alpha}(\|d\|_\infty) \bigg)  \nonumber \\
& \leq \underline{\alpha}_{\sigma(k)}^{-1} \bigg(2\mu^{N_0} \delta^k \overline{\alpha}_{\sigma(0)}(\|x_0-x_{\sigma(0)}^*\|) \bigg)  + \underline{\alpha}_{\sigma(k)}^{-1} \bigg( \frac{2\mu^{N_0}}{1-\delta} \hat{\alpha}(\|d\|_\infty) \bigg) \label{eq:beta-alpha-bound-2}
\end{align}
\normalsize
where the last inequality follows by \cite[Lemma~14]{sarkans2016input} with $\epsilon=1$. Observe that the first term in \eqref{eq:beta-alpha-bound-2} is in class $\mathcal{KL}$ while the second is in class $\mathcal{K}_\infty$. Let $s\in\mathbb{R}_+$ and define $\beta\in \mathcal{KL}$ as
\begin{align}\label{eq:beta-discrete}
\beta(s,k):=\max_{p,q\in\mathcal{P}} \underline{\alpha}_p^{-1} \big(2\mu^{N_0} \delta^k \overline{\alpha}_q(s) \big) \enspace.
\end{align}
Further define $\alpha\in\mathcal{K}_\infty$ as 
\begin{align}\label{eq:alpha-discrete}
\alpha(s):= \max_{p\in\mathcal{P}} \underline{\alpha}_p^{-1} \bigg( \frac{2\mu^{N_0}}{1-\delta} \hat{\alpha}(s) \bigg) \enspace.
\end{align}
Using \eqref{eq:beta-discrete} and \eqref{eq:alpha-discrete} in \eqref{eq:beta-alpha-bound-2} gives \eqref{eq:beta-alpha-bound-disc}.
\end{proof}

\section{}
\label{app:lemma-cont}

\begin{proof}[Proof of Lemma~\ref{lem:V-sigma-cont}]
We only provide a sketch due to the similarity with the proof of Lemma~\ref{lem:V-sigma}. The bound \eqref{eq:ISS-cont} follows from the observation that for any switching instant between $\underline{t}\leq t<t_N$, we have $x(t_n)\not\in\accentset{\circ}{\mathcal{M}}_{\sigma(t_n^-)}(\kappa)$, permitting the use of $\mu$ for this interval. Thus, we can use similar arguments as in the proof of \cite[Theorem~3.1]{vu2007input} to obtain \eqref{eq:ISS-cont}. With the availability of \eqref{eq:ISS-cont}, we can follow steps identical to the proof of \eqref{eq:beta-alpha-bound-disc} to obtain \eqref{eq:beta-alpha-bound-cont}. The only difference is in the  class $\mathcal{KL}$ and class $\mathcal{K}_\infty$ functions involved in the estimates.
\end{proof}

\section{}
\label{app:sim}

We provide the system \eqref{eq:disc-family-sim} of Example 1 of Section~\ref{subsec:simulation}.

\small
\begin{equation}\nonumber
A_1 = 
\begin{bmatrix}
0.5233 & -0.0041 & -0.0609 & -0.0081 \\
-0.0041 & 0.4762 & -0.0526 & -0.0476 \\
-0.0609 & -0.0526 &  0.4957 & -0.1052 \\
-0.0081 &  -0.0476  & -0.1052  &  0.4048
\end{bmatrix} \enspace,
\end{equation}
\begin{equation} \nonumber
A_2 = \begin{bmatrix}
0.3133  & -0.0324  &  0.0062 &   0.0082 \\
-0.0324  &  0.3787  & -0.0150  & -0.0200 \\
    0.0062 &  -0.0150  &  0.3029  &  0.0038 \\
    0.0082  & -0.0200  &  0.0038  &  0.3051 
\end{bmatrix} \enspace,
\end{equation}
\begin{equation} \nonumber
A_3 = \begin{bmatrix}
0.4100  &  0.0003  &  0.0278 &  -0.0360 \\
    0.0003  &  0.3944  &  0.0001  & -0.0224 \\
    0.0278  &  0.0001  &  0.3629  & -0.0167 \\
   -0.0360  & -0.0224 &  -0.0167  &  0.3827 
\end{bmatrix} \enspace,
\end{equation}
\begin{equation}\nonumber
B_1=B_2 = B_3 = 
\begin{bmatrix}
0.1 & 0.1 & 0.1 & 0.1
\end{bmatrix}^{\rm T} \enspace,
\end{equation}
\begin{equation}\nonumber
C_1 = 
\begin{bmatrix}
0.2383 & 0.0020 & 0.0304 & 0.0041
\end{bmatrix}^{\rm T} \enspace,
\end{equation}
\begin{equation}\nonumber
C_2 = 
\begin{bmatrix}
-0.3433 & -0.0162 & 0.0031 & 0.0041
\end{bmatrix}^{\rm T} \enspace,
\end{equation}
\begin{equation}\nonumber
C_3 = 
\begin{bmatrix}
-0.0001 & 0.3028 & -0.0001 & 0.0112
\end{bmatrix}^{\rm T} \enspace.
\end{equation}

\normalsize
We provide the system \eqref{eq:cont-family-sim} of Example 2 of Section~\ref{subsec:simulation}.

\small
\begin{align}\nonumber
A_1 = 
\begin{bmatrix}
-1  &  -1 \\
 1  &  -1
\end{bmatrix},
C_1 = 
\begin{bmatrix}
1 \\ -1
\end{bmatrix}, 
A_2 = \begin{bmatrix}
-1  &  -1 \\
 10  &  -1
\end{bmatrix},
C_2 = 
\begin{bmatrix}
-1  \\ 10
\end{bmatrix} .
\end{align}

\ifCLASSOPTIONcaptionsoff
  \newpage
\fi



%

%
\bibliographystyle{IEEEtran}
\bibliography{Avg_Dwell_Time}
%

\vspace{-1cm}
\begin{IEEEbiography}[{\includegraphics[width=1in,height=1.25in,clip,keepaspectratio]{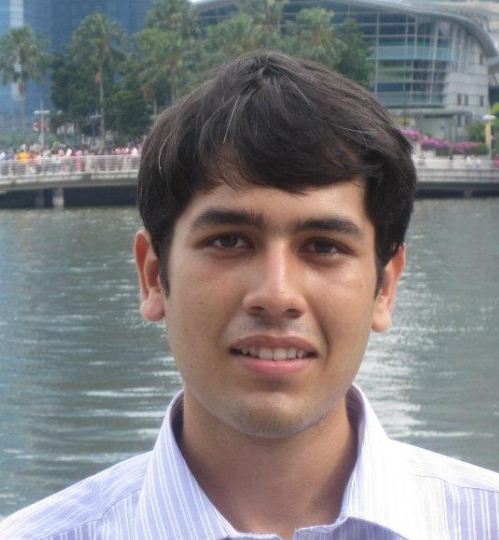}}]{Sushant Veer}
is a PhD Candidate in the Department of Mechanical Engineering at the University of Delaware. He received his B.Tech in Mechanical Engineering from the Indian Institute of Technology Madras (IIT-M) in 2013. His research interests lie in the control of complex dynamical systems with application to dynamically stable robots. 
\end{IEEEbiography}

\vspace{-1 cm}

\begin{IEEEbiography}[{\includegraphics[width=1in,height=1.25in,clip,keepaspectratio]{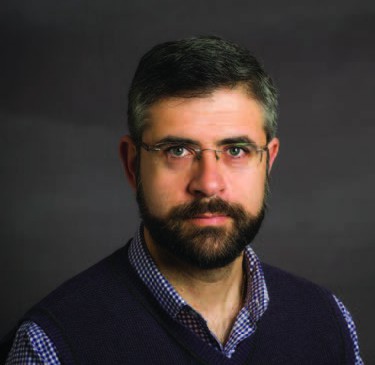}}]{Ioannis Poulakakis}
received his Ph.D. in Electrical Engineering (Systems) from the University of Michigan, MI, in 2009. From 2009 to 2010 he was a post-doctoral researcher with the Department of Mechanical and Aerospace Engineering at Princeton University, NJ. Since September 2010 he has been with the Department of Mechanical Engineering at the University of Delaware, where he is currently an Associate Professor. His research interests lie in the area of dynamics and control with applications to robotic systems, particularly dynamically dexterous legged robots. Dr. Poulakakis received the NSF CAREER Award in 2014. 
\end{IEEEbiography}





\end{document}